\newtheorem{theorem}{Theorem}
\newtheorem{lemma}{Lemma}
\newtheorem{problem}{Problem}
\providecommand{\cref}[1]{Chapter~\ref{chap:#1}}
\providecommand{\R}{\ensuremath{\mathbb{R}}}
\providecommand{\inprod}[1]{\left\langle#1\right\rangle}
\providecommand{\set}[1]{\left\{#1\right\}}
\providecommand{\bydef}{\overset{\text{def}}{=}}
\providecommand{\diag}{\mathop{\mathrm{diag}}}
\providecommand{\rank}{\mathop{\mathrm{rank}}}
\renewcommand{\vec}[1]{\ensuremath{\mathbf{#1}}}
\providecommand{\mat}[1]{\ensuremath{\mathbf{#1}}}
\providecommand{\calR}{\mathcal{R}}
 \providecommand{\mD}{\mat{D}}
\providecommand{\mI}{\mat{I}}  
\providecommand{\mN}{\mat{N}}
\providecommand{\mM}{\mat{M}} \providecommand{\mP}{\mat{P}} 
\providecommand{\mQ}{\mat{Q}} \providecommand{\mR}{\mat{R}}
 \providecommand{\mU}{\mat{U}} 
 \providecommand{\mT}{\mat{T}}
\providecommand{\vc}{\vec{c}}
 \providecommand{\vn}{\vec{n}} 
 \providecommand{\vp}{\vec{p}}
\providecommand{\vq}{\vec{q}} \providecommand{\vr}{\vec{r}}
\providecommand{\vx}{\vec{x}}
 \providecommand{\vv}{\vec{v}}
\newcommand{\RNum}[1]{\uppercase\expandafter{\romannumeral #1\relax}}
\newcommand*\rel@kern[1]{\kern#1\dimexpr\macc@kerna}
\newcommand*\widebar[1]{%
  \begingroup
  \def\mathaccent##1##2{%
    \rel@kern{0.8}%
    \overline{\rel@kern{-0.8}\macc@nucleus\rel@kern{0.2}}%
    \rel@kern{-0.2}%
  }%
  \macc@depth\@ne
  \let\math@bgroup\@empty \let\math@egroup\macc@set@skewchar
  \mathsurround\z@ \frozen@everymath{\mathgroup\macc@group\relax}%
  \macc@set@skewchar\relax
  \let\mathaccentV\macc@nested@a
  \macc@nested@a\relax111{#1}%
  \endgroup
}
\begin{document}
%
\title{Shapes from Echoes: Uniqueness from Point-to-Plane Distance Matrices\\
\thanks{
M. Krekovi\'c and M. Vetterli are with the School of Computer and Communication Sciences, Ecole Polytechnique F\'ed\'erale de Lausanne (EPFL), CH-1015 Lausanne, Switzerland (e-mail: miranda.krekovic@epfl.ch, martin@vetterli@epfl.ch).
I. Dokmani\'c is with the Coordinated Science Lab, University of Illinois at Urbana-Champaign, Urbana, IL 61801, USA (e-mail: dokmanic@illinois.edu).
This work was supported by the Swiss National Science Foundation grant number 20FP-1 151073, ``Inverse Problems regularized by Sparsity''. I. Dokmani\'c was supported by a Google Faculty Research Award.
}
}%
%
%

\author{
Miranda~Krekovi\'c,~\IEEEmembership{Student~Member,~IEEE,} Ivan Dokmani\'c,~\IEEEmembership{Member,~IEEE,}
        and~Martin~Vetterli,~\IEEEmembership{Fellow,~IEEE}
}

%
%

\markboth{IEEE Transactions on Signal Processing,~Vol.~XX, No.~X, February~2019}%
{Krekovi'\c \MakeLowercase{\textit{et al.}}: Shapes from Echoes: Uniqueness from Point-to-Plane Distance Matrices}
%



\renewcommand{\arccos}{\mathrm{acos~}}
\renewcommand{\arctan}{\mathrm{atan~}}
\renewcommand{\arcsin}{\mathrm{asin~}}

\maketitle

\begin{abstract}
We study the problem of localizing a configuration of points and planes from the collection of point-to-plane distances. This problem models simultaneous localization and mapping from acoustic echoes as well as the notable ``structure from sound'' approach to microphone localization with unknown sources. In our earlier work we proposed computational methods for localization from point-to-plane distances and noted that such localization suffers from various ambiguities beyond the usual rigid body motions; in this paper we provide a complete characterization of uniqueness. We enumerate equivalence classes of configurations which lead to the same distance measurements as a function of the number of planes and points, and algebraically characterize the related transformations in both 2D and 3D. Here we only discuss uniqueness; computational tools and heuristics for practical localization from point-to-plane distances using sound will be addressed in a companion paper.
\end{abstract}

\begin{IEEEkeywords}
point-to-plane distance matrix,
inverse problem in the Euclidean space, uniqueness of the reconstruction, collocated source and receiver, indoor localization and mapping.
\end{IEEEkeywords}

%
\IEEEpeerreviewmaketitle

\section{Introduction}
\label{introduction}

Localization methods are traditionally based on geometric information (angles, distances, or both) about known objects, often referred to as landmarks or anchors. Famous examples include global positioning by measuring distances to satellites and navigation at sea by measuring angles of celestial bodies. More recent work on simultaneous localization and mapping (SLAM) addresses the case where the positions of landmarks are also unknown.

In this paper, we address localization from distances to (unknown) planes instead of the more extensively studied localization from distances to points. Concretely, given pairwise distances between a set of points and a set of planes, we wish to localize both the planes and the points.
It is clear that a single point does not allow unique localization. As we will show, localization is in general possible with multiple points, though there are surprising exceptions.

Localization from point-to-plane distances models many practical problems. Our motivation comes from indoor localization with sound. Imagine a mobile device equipped with a single omnidirectional source and a single omnidirectional receiver that measures its distance to the surrounding reflectors, for example by emitting acoustic pulses and receiving echoes. The times of flight of the first-order echoes recorded by the device correspond to point-to-plane distances.
They could be used to pinpoint its location given the positions of the walls, but the problem is harder and more interesting when we do not know where the walls are.
A similar principle is used by bats to echolocate, although we do not assume having any directional information.
Another problem that can be cast in this mold is the well-known ``structure from sound'' \cite{thrun2006affine}, where the task is to localize a set of microphones from phase differences induced by a set of unknown far field sources.

Prior work on localization from point-to-plane distances has so far been mostly computational \cite{kuang2012understanding, Krekovic2018structure}. Although several papers point out problems with uniqueness \cite{peng2015room, krekovic2017omnidirectional}, a complete study was up to now absent.
The most notable result is presented in~\cite{boutin2019drone}, which shows that one can reconstruct a room from the first-order echoes from one omnidirectional loudspeaker to four non-planar microphones, placed together on a drone with generic position and orientation.

In this work, we focus on uniqueness of reconstruction from point-to-plane distance matrices (PPDMs). Unlike in the case of localization from points, where with sufficiently many points the only possible ambiguity is that of translation, rotation, and reflection~\cite{dokmanic2015euclidean}, our analysis shows that localization from PPDMs suffers from additional ambiguities that correspond to certain continuous deformations of the points--planes system. 

\subsection{Related work}

The PPDM problem is related to the more standard multidimensional unfolding \cite{schonemann1970metric}: localization of a set of points from distances to a set of point landmarks. There are several variations of this problem that correspond to different assumptions of what is known: 1) given distances to known landmarks, localize unknown points (i.e., estimate the unknown trajectory), 2) given distances to known points, reconstruct unknown landmarks (i.e., map the unknown environment), 3) estimate both unknown landmarks and unknown points from their pairwise distances.

The first scenario is solved by simple multilateration \cite{bancroft1985algebraic}. The second scenario is a topic of active research in signal processing and room acoustics, where it is known as ``hearing the shape of a room''~\cite{Dokmanic2015thesis, Dokmanic2013, Antonacci2012}. Much of that work assumes that the geometry of the microphone array is known. If that is the case, since the source is fixed, the landmarks are modeled by points that correspond to virtual sources.

When neither the landmarks nor the points are known, we get an instance of SLAM. In general SLAM, the task is to simultaneously build some representation of the map of the environment and estimate the trajectory. Different flavors of SLAM involve different sensing modalities; prior work has considered visual~\cite{Davison2007, Clipp2010, Blosch2010, fuentes2015visual}, range-only~\cite{Djugash2006, Blanco2008, menegatti2009range}, and acoustic SLAM~\cite{Hu2011, zhou2017batmapper, evers2018acoustic, pradhan2018smartphone}, as well as solutions based on multiple sensor modalities~\cite{Brunner2011, kubelka2015robust, milford2016ratslam}.
Localization from PPDMs corresponds to range-only SLAM, though conventional approaches to SLAM rely on some noisy estimate of the trajectory, which is more information than we assume in scenario (3) above.

Methods for SLAM from reflections of sound or radio waves~\cite{Dokmanic2015thesis, Meissner2014, Leitinger2015, Dokmanic2014tc, evers2016acoustic} usually assume a fixed source or a fixed receiver, so that the echoes correspond to virtual beacons that provide range measurements. This information in turn allows to localize both sets using tools such as multidimensional unfolding~\cite{schonemann1970metric}. More recent works~\cite{Kietlinski2011thesis, Kietlinski2011, Meissner2014, leitingerbelief} show how to exploit multipath reflections. An appeal of our collocated setup is that it does not require any preinstalled infrastructure~\cite{Krekovic2016echoslam}.

\subsection{Our contributions}

We have previously shown that range-only SLAM can be addressed effectively using Euclidean distance matrices (EDM)~\cite{dokmanic2016acoustic}. Here we show how our new problem can be similarly cast as localization from PPDMs. This completes and extends our work on the 2D case \cite{Krekovic2016look}. Unlike in standard SLAM, we do not assume any motion model; the waypoints can be scattered arbitrarily.  

We study uniqueness of reconstruction of point--plane configurations from their pairwise distances. We derive conditions under which the localization is unique, and provide a complete characterization of non-uniqueness by enumerating the equivalence classes of configurations that lead to same PPDMs. Since we are motivated by SLAM, we refer to point--plane configurations as \textit{rooms} and \textit{trajectories}. The conclusions, however, are general, and can be applied to any of the discussed applications.

Finally, while PPDMs provide a good basic model for SLAM from echoes with a collocated source and receiver, the full SLAM problem presents a number of additional challenges. Problems of associating echoes to walls, dealing with missing echoes, and telling first-order from higher-order echoes will be addressed in a companion paper in preparation. Here we assume having a full PPDM as defined in Section \ref{problem_setup}.

\section{Problem setup}
\label{problem_setup}

Suppose that a mobile device carrying an omnidirectional source and an omnidirectional receiver traverses a trajectory described by $N$ waypoints $\{ {\vr_n} \}_{n=1}^N$. At every waypoint, the source produces a pulse, and the receiver registers the echoes.
Since the source and receiver are collocated, the distance $d_{nk}$ between the $n$th waypoint and $k$th wall is given by
\begin{align}
d_{nk} = \tfrac{1}{2} c \tau_{nk},
\end{align}
where $c$ is the speed of sound and $\tau_{nk}$ is the propagation time of the first-order echo. We can thus find the distances between waypoints and walls by measuring the times of arrival of first-order echoes.

To describe a room, we consider $K$ walls $\{ { {\cal P}_k } \}_{k=1}^K$ (lines in $2$D and planes in $3$D) defined by their unit normals $\vn_k \in \R^m$ and any point $\vp_k \in \R^m$ on the wall, where $m \in \set{2, 3}$. For any $\vx \in \mathcal{P}_k$ we have $\inprod {\vn_k, \vx} = q_k$, where $q_k = \inprod{\vn_k, \vp_k}$ is the distance of the wall from the origin.

Given the distances between walls and waypoints,
\begin{equation}
\label{eq:distdef}
d_{nk} = \mathrm{dist}(\vr_n, {\cal P}_k) = \inprod{\vp_k -\vr_{n},  \vn_{k}} = q_k - \inprod{\vr_{n},  \vn_{k}},
\end{equation}
for  $n = 1, ...,N$ and $k = 1, ..., K$, we define
\begin{equation}
\mD \bydef [d_{nk}]_{n,k=1}^{N, K} \in \mathbb{R}^ {N \times K}
\end{equation} to be the \textit{point-to-plane distance matrix} (PPDM); we always assume $N \geq K$.

By setting $\vq \bydef [ q_1, \hdots, q_K ]^\top$, $\mR \bydef [\vr_1, \hdots, \vr_N ]$, and $\mN \bydef [\vn_1, \hdots, \vn_K]$, we can express a PPDM as
\begin{align}
\label{eq:distdef2}
    \mD = \mathbf{1} \vq^\top - \mR^\top \mN,
\end{align}
where $\vq$ is the vector of distances between the planes and the origin, columns of $\mR \in \mathbb{R}^{m \times N}$ are the waypoint coordinates, and columns of $\mN \in \mathbb{R}^{m \times K}$ outward looking normal vectors of the planes.
Letting $\mP \bydef \begin{bmatrix} \vp_1 & \hdots & \vp_K \end{bmatrix}$, the vector $\vq$ can be written as $\vq = \diag(\mP^\top \mN)$, where $\diag(\mM)$ denotes the vector formed from the diagonal of $\mM$.

A pair of planes and waypoints defines a \textit{room--trajectory} configuration ${\cal R} = \left( \set {{\cal P}_k}_{k = 1}^K, \set {\vr_n}_{n = 1}^N \right)$, and the corresponding PPDM $\mD({\cal R})$.
In realistic convex configurations, all entries of the PPDM \eqref{eq:distdef2} are non-negative. However, in our relaxed definition of a room, the waypoints can lie on either side of a wall, so we allow for signed distances.

Our central question is whether a given PPDM $\mD({\cal R})$ specifies a unique room-trajectory configuration ${\cal R}$, or, equivalently, whether the map $\mathcal{R} \mapsto \mD(\mathcal{R})$ is injective. It is clear that rotated, translated, and reflected versions of $\mathcal{R}$ all give the same $\mD$, so we consider them to be the same configuration (we consider the equivalence class of all room--trajectory configurations modulo rigid motions and reflections).

We formalize the uniqueness question as follows:
\begin{problem}
\label{problem:one}
Are there distinct room--trajectory configurations ${\cal R}^1 = \left( \set {{\cal P}^1_k}_{k = 1}^K, \set {\vr_n^1}_{n = 1}^N \right)$ and ${\cal R}^2 = \left( \set {{\cal P}^2_k}_{k = 1}^K, \set {\vr^2_n}_{n = 1}^N \right)$ which are not rotated, translated, and reflected versions of each other, such that $\mD({\cal R}^1) = \mD({\cal R}^2)$?
\end{problem}

\section{Uniqueness of the reconstruction}
\label{sec:uniqueness}

Perhaps surprisingly, there are many examples of rooms from Problem \ref{problem:one}. The main tool in identifying the sought equivalence classes is the following lemma.

\begin{figure*}[t!]
\begin{align}
\label{eq:entries_N}
   \widebar{\mN}_{2D}^\top =
   \begingroup 
    \setlength\arraycolsep{1.22pt}
   \begin{bmatrix}
    \cos\varphi^0_1 & \sin\varphi^0_1 & \cos\varphi_1 & \sin\varphi_1 \\
    \cos\varphi^0_2 & \sin\varphi^0_2 & \cos\varphi_2 & \sin\varphi_2 \\
    \vdots & \vdots & \vdots & \vdots \\
    \cos\varphi^0_K & \sin\varphi^0_K & \cos\varphi_K & \sin\varphi_K
    \end{bmatrix},
    \widebar{\mN}_{3D}^\top =
    \begin{bmatrix}
    \sin\theta_1^0 \cos\varphi_1^0 & \sin\theta_1^0 \sin\varphi_1^0 & \cos\theta_1^0 & \sin\theta_1 \cos\varphi_1 & \sin\theta_1 \sin\varphi_1 & \cos\theta_1 \\
    \sin\theta_2^0 \cos\varphi_2^0 & \sin\theta_2^0 \sin\varphi_2^0 & \cos\theta_2^0 & \sin\theta_2 \cos\varphi_2 & \sin\theta_2 \sin\varphi_2 & \cos\theta_2 \\
    \vdots & \vdots & \vdots & \vdots & \vdots & \vdots  \\
    \sin\theta_K^0 \cos\varphi_K^0 & \sin\theta_K^0 \sin\varphi_K^0 & \cos\theta_K^0 &  \sin\theta_K \cos\varphi_K & \sin\theta_K \sin\varphi_K & \cos\theta_K
    \end{bmatrix}
    \endgroup
\end{align}
\noindent\hrulefill 
\end{figure*}

\begin{lemma} 
\label{lemma1}
Let ${\cal R}^0= \left( \set{{\cal P}_k^0}_{k=1}^K, \set{\vr^0_n}_{n=1}^N \right)$. Then for any ${\cal R}^1 = \left(\set{{\cal P}^1_k}_{k=1}^K, \set{\vr_n^1}_{n=1}^N \right)$ such that $\mD({\cal R}^0) = \mD({\cal R}^1)$, there exist a translation ${\cal R} = \left(\set{{\cal P}_k}_{k=1}^K, \set{\vr_n}_{n=1}^N \right)$ of ${\cal R}^1$ satisfying
\begin{align}
\label{eq:lemma_1}
\widebar{\mR}^\top \widebar{\mN} = \vec{0},
\end{align}
where
\begin{align}
\begin{aligned}
\label{eq:def_R_N}
&\widebar\mR \bydef \begin{bmatrix} \mR^0 \\ -\mR \end{bmatrix} =
\begin{bmatrix}
\vr^0_1  & \hdots & \vr^0_N  \\
-\vr_1  & \hdots & -\vr_N  \\
\end{bmatrix}, \\ 
&\widebar\mN \bydef \begin{bmatrix} \mN^0 \\ \mN \end{bmatrix} = \begin{bmatrix}  
\vn^0_{1}  & \hdots & \vn^0_{K} \\
\vn_{1}  & \hdots & \vn_{K}  
\end{bmatrix}.
\end{aligned}
\end{align}
Conversely, given $\mathcal{R}^0$, if \eqref{eq:lemma_1} holds for some $\mR$ and $\mN$, then there exists an ${\cal R} = \left(\set{{\cal P}_k}_{k=1}^K, \set{\vr_n}_{n=1}^N \right)$ with waypoints $\mR$ and wall normals $\mN$ such that $\mD({\cal R}^0) = \mD({\cal R})$.
\end{lemma}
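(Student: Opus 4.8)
The plan is to reduce the whole statement to the single matrix identity $\widebar{\mR}^\top \widebar{\mN} = (\mR^0)^\top \mN^0 - \mR^\top \mN$, so that the asserted conclusion \eqref{eq:lemma_1} is equivalent to $(\mR^0)^\top \mN^0 = \mR^\top \mN$. Writing each PPDM in its factored form $\mD(\calR) = \mathbf{1}\vq^\top - \mR^\top\mN$ from \eqref{eq:distdef2}, both halves of the lemma become statements about when the rank-one part $\mathbf{1}\vq^\top$ can be matched across the two configurations. I would prove the two implications separately.

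For the forward implication, I would first record that $\mD$ is invariant under a translation of a configuration: shifting every waypoint by $\vt$ leaves $\mN$ unchanged, sends $\mR \mapsto \mR + \vt\,\mathbf{1}^\top$ and $\vq \mapsto \vq + \mN^\top\vt$, and one checks from \eqref{eq:distdef2} that these two shifts cancel. Hence every translate of $\calR^1$ still realizes $\mD(\calR^0)$. The idea is then to pick the translate $\calR$ of $\calR^1$ whose wall-offset vector equals that of $\calR^0$, i.e. $\vq = \vq^0$; once this is arranged, equating $\mathbf{1}(\vq^0)^\top - (\mR^0)^\top\mN^0 = \mathbf{1}\vq^\top - \mR^\top\mN$ and cancelling the common term $\mathbf{1}(\vq^0)^\top$ gives exactly $(\mR^0)^\top\mN^0 = \mR^\top\mN$, which is \eqref{eq:lemma_1}.

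The crux is showing that such a translate exists, i.e. that $(\mN^1)^\top\vt = \vq^0 - \vq^1$ is solvable for $\vt$. I would handle this by exploiting the translation freedom to place the origin at the first waypoint of $\calR^0$, so that $\vr_1^0 = \vzero$, and then translating $\calR^1$ so that its first waypoint also lands at the origin. Reading off the first row of each PPDM then gives $d_{1k} = q_k^0 - \inprod{\vr_1^0,\vn_k^0} = q_k^0$ and likewise $d_{1k} = q_k$, so $\vq = \vq^0$ automatically and the required identity follows. This normalization step is the one place that needs care, since the conclusion \eqref{eq:lemma_1} is not itself translation invariant; the argument works precisely because aligning the two reference waypoints with the origin forces the offset vectors to coincide, which is exactly what makes the linear system above consistent.

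The converse is direct and needs no such care. Given $\mR$ and $\mN$ with $(\mR^0)^\top\mN^0 = \mR^\top\mN$, I would build $\calR$ by taking walls with normals the columns of $\mN$ and offsets $\vq \bydef \vq^0$ (choosing any point $\vp_k$ on the $k$-th plane with $\inprod{\vn_k,\vp_k} = q_k^0$), together with waypoints given by the columns of $\mR$. Then $\mD(\calR) = \mathbf{1}(\vq^0)^\top - \mR^\top\mN = \mathbf{1}(\vq^0)^\top - (\mR^0)^\top\mN^0 = \mD(\calR^0)$, as required. The only genuine obstacle in the whole proof is the existence of the matching translation in the forward direction; the remainder is the bookkeeping of substituting the factored form \eqref{eq:distdef2} and cancelling the shared rank-one term.
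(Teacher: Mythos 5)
Your proposal is correct and follows essentially the same route as the paper: the converse by constructing the new walls with offsets $q_k = q_k^0$, and the forward direction by translating both configurations so that their first waypoints sit at the origin, which forces $\vq = \vq^0$ and makes the rank-one terms cancel, yielding \eqref{eq:lemma_1}. The differences are purely presentational---you argue with the factored form $\mD = \mathbf{1}\vq^\top - \mR^\top\mN$ while the paper works entrywise with inner products---and you take the same implicit liberty the paper's own proof takes of translating $\calR^0$ as well as $\calR^1$ (justified by the convention that configurations are considered modulo rigid motions), correctly flagging this normalization as the one delicate step.
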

\begin{proof}
We first prove the converse. Assume \eqref{eq:lemma_1} holds,
\begin{align*}
\label{eq:vector_form_lemma_proof}
\inprod{\vr^0_n, \vn^0_k} =\inprod{\vr_n, \vn_k},~\text{for all}~1 \leq n \leq N, \ 1 \leq k \leq K,
\end{align*}
and let
$u_k =\inprod{\vp^0_k, \vn^0_k} \mbox{ and } v_k = \inprod{\vp_k, \vn_k }$.
Note that we can always find $\vp_k$ such that $u_k = v_k$, for all $k \in \set{1, \ldots, K}$. For these $\vp_k$ we have
\begin{align}
- \inprod{\vr^0_n, \vn^0_k} + \inprod{\vp^0_k, \vn^0_k} = - \inprod{\vr_n, \vn_k} + \inprod{\vp_k, \vn_k}.
\end{align}
Letting $\calR$ be a configuration with waypoints $\mR$, wall normals $\mN$, and $k$th wall passing through $\vp_k$, the definition \eqref{eq:distdef} implies $\mD({\cal R}^0) = \mD({\cal R})$.

Now assume that for some $\mathcal{R}$ we have $\mD({\cal R}^0) = \mD({\cal R})$. Equivalently,
\begin{align}
\begin{aligned}
\label{eq:lemma_proof_eq}
\inprod{\vr^0_n, \vn^0_k} - \inprod{\vr_n, \vn_k} &=  \inprod{\vp^0_k, \vn^0_k} - \inprod{\vp_k, \vn_k} \\ &= q_k^0 - q_k,
\end{aligned}
\end{align}
for all $1 \leq n \leq N$, $1 \leq k \leq K$.

As we consider translated, rotated and reflected versions of $\mathcal{R}$ as the same $\mathcal{R}$, we can translate configurations $\mathcal{R}^0$ and $\mathcal{R}$ by $-\vr_1^0$ and $-\vr_1$, respectively. Hence, the waypoints for $n=1$ fall at the origin in both rooms, and~\eqref{eq:lemma_proof_eq} implies that $q_k^0 - q_k=0$ for $1 \leq k \leq K$, or $\widebar{\mR}^\top \widebar{\mN} = \vec{0}$ in the matrix form.
\end{proof}

We now characterize the pairs of configurations that satisfy~\eqref{eq:lemma_1}. In other words, we identify the equivalence classes of rooms and trajectories with respect to PPDMs.

The non-uniqueness condition~\eqref{eq:lemma_1} is satisfied when the columns of $\widebar{\mR}$ are in the nullspace of $\widebar{\mN}^\top$.
We parameterize the unit-norm columns of $\widebar{\mN}^\top = \begin{bmatrix} {\mN^0}^\top & \mN^\top \end{bmatrix}$ as
\begin{align}
    \vn^0_k = \begin{bmatrix}
    \cos\varphi^0_k\\
    \sin\varphi^0_k
    \end{bmatrix} \quad \mbox{and} \quad
    \vn_k = \begin{bmatrix}
    \cos\varphi_k\\
    \sin\varphi_k
    \end{bmatrix} 
\end{align}
in $2$D, and
\begin{equation}
\vn_k^0 =
\begin{bmatrix} \sin\theta_k^0 \cos\varphi_k^0 \\ \sin\theta_k^0 \sin\varphi_k^0 \\ \cos\theta_k^0
\end{bmatrix} \quad \mbox{and} \quad
\vn_k =
\begin{bmatrix} \sin\theta_k \cos\varphi_k\\ \sin\theta_k \sin\varphi_k \\ \cos\theta_k \end{bmatrix}
\end{equation}
in $3$D; $\widebar{\mN}^\top$ is written out in~\eqref{eq:entries_N}. 
The wall normals $\mN^0$ and $\mN$ of the two room-trajectory configurations $\mathcal{R}^0$ and $\mathcal{R}$ are uniquely determined by the angles $\left\{\varphi^0_k \right\}_{k=1}^{K}$ and $\left\{ \varphi_k \right\}_{k=1}^{K}$ in $2$D, or by the pairs of angles $\left\{\theta^0_k, \varphi^0_k\right\}_{k=1}^{K}$ and $\left\{\theta_k, \varphi_k \right\}_{k=1}^{K}$ in $3$D, where $\varphi^0_k, \varphi_k \in [0, 2\pi)$ and $\theta^0_k, \theta_k \in [0, \pi)$. As the converse is also true---the matrix $\widebar{\mN}$ uniquely determines the angles---we interchangeably use both notations.

For $K \geq 2m$, the nullspace of $\widebar{\mN}^\top$ is generically empty. To find the configurations that are not uniquely determined by PPDMs, we impose linear dependencies among its columns: we select any $r$ linearly independent columns of $\widebar{\mN}^\top$ and assume that the remaining columns are their linear combinations.
Restricting the analysis to a particular column selection does not reduce generality, as shown in 
Appendix.

In addition to these linear dependencies, the columns in~\eqref{eq:entries_N} are also subject to non-linear relationships due to the normalization constraint. Indeed, $\widebar{\mN}^\top$ has $K$ rows, $2m$ columns, and only $2(m-1) K$ free parameters.
The combination of these linear and non-linear dependencies determines the equivalence classes of the rooms and trajectories with respect to PPDMs. Our goal is to characterize these classes. 

Specifically, for every equivalence class we want to find a reference configuration $\mathcal{R}^0$ that identifies the class, and a rule that generates other $\mathcal{R}$ with the same PPDM.
Letting $r = \rank(\widebar{\mN})$, the analysis is performed for every $r \in \set{1, \ldots, 2m - 1}$ in six steps. We introduce and explain those steps on the case $r=2$ in $2$D, rather than $r=1$ which gives degenerate solutions (we analyze $r=1$ subsequently).

As we will see, most of the identified cases correspond to rooms that are in some sense degenerate (for example, a ``room'' with all walls parallel), although as point--plane configurations they are perfectly reasonable.

The analysis in Section \ref{sec:classification_2d} and Section \ref{sec:classification_3d} together with the fact that  Lemma \ref{lemma1} is sufficient and necessary prove that the union of all equivalence classes described in this paper (see Fig.~\ref{fig:classification}) is in fact the set of all possible configurations that are not uniquely determined by their PPDM. In other words, a room can be uniquely reconstructed from a PPDM (modulo rigid motions) if and only if it does not belong to one of the classes illustrated in Fig.~\ref{fig:classification}.

\begin{theorem}
In 2D, a room--trajectory configuration is not uniquely determined by its PPDM if and only if at least one of the following holds: 1) waypoints are collinear, 2) all walls are parallel (infinitely long corridors), 3) walls form a parallelogram possibly extended by parallel walls (see Fig.~\ref{fig:classification}).

In 3D, a room--trajectory configuration is not uniquely determined by its PPDM if and only if at least one of the following holds: 1) $K < 6$, 2) waypoints are coplanar, 3) the configuration is in one of the classes summarized in Fig.~\ref{fig:classification}.
\label{thm:main}
\end{theorem}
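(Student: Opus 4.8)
The plan is to prove both directions through the equivalence furnished by Lemma~\ref{lemma1}: a reference configuration $\mathcal{R}^0$ fails to be unique exactly when there exist waypoints $\mR$ and unit-norm wall normals $\mN$, not obtainable from $\mR^0,\mN^0$ by a rigid motion, with $\widebar{\mR}^\top\widebar{\mN}=\vzero$. I read this as a geometric statement in $\R^{2m}$: writing $V\bydef\mathrm{col}(\widebar{\mN})$ and $W\bydef\mathrm{span}\set{[\vr_n^0;-\vr_n]}$, condition~\eqref{eq:lemma_1} says $W\perp V$. Setting $r\bydef\rank(\widebar{\mN})=\dim V$, the first-block projection of $V$ contains $\mathrm{span}\set{\vn_k^0}$ while that of $W$ equals $\mathrm{span}\set{\vr_n^0}$, so a short dimension count ties $r$ to the affine dimensions of the reference room and trajectory. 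This count isolates the three regimes in the statement.

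First I would dispatch the degenerate regimes, which give the ``easy'' classes and simultaneously supply the converse constructions. If the reference waypoints are not full-dimensional (collinear in $2$D, coplanar in $3$D), then only the projections of the normals onto $\mathrm{span}\set{\vr_n^0}$ are constrained, so one may tilt the perpendicular components of the $\vn_k$ (rescaling to unit norm) and obtain a genuinely different $\mathcal{R}$ with the same PPDM; similarly, if the reference walls are all parallel one deforms along the unconstrained directions. In $3$D the same occurs whenever $K<2m=6$, since then $\rank(\widebar{\mN})\le K<6$ always leaves $V^\perp$ nontrivial. These account for classes $1$ and $2$ in $2$D and for the cases $K<6$ and coplanar waypoints in $3$D.

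The heart of the argument is the remaining regime, in which the reference waypoints span $\R^m$ and the reference normals span $\R^m$. The dimension count then pins $m\le\dim W\le 2m-r\le m$, forcing $r=m$, and since the first-block projection restricts to an isomorphism $V\to\R^m$, the subspace $V$ is the graph of a linear map $\mA\colon\R^m\to\R^m$. This forces $\vn_k=\mA\vn_k^0$ for every $k$, and solving~\eqref{eq:lemma_1} for the points gives $\mR=\mA^{-\top}\mR^0$ (with $\mA$ invertible since the $\vn_k$ still span). If $\mA$ is orthogonal this is a rotation/reflection, hence a rigid motion and no violation of uniqueness. The only nonlinear obstruction left is the unit-norm constraint $\norm{\mA\vn_k^0}=1$, i.e.\ $\inprod{\vn_k^0,\mE\vn_k^0}=0$ for all $k$, with $\mE\bydef\mA^\top\mA-\mI$ symmetric. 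Non-uniqueness in this regime therefore occurs precisely when this quadratic form admits a nonzero symmetric solution $\mE$ that vanishes on every reference normal.

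I would finish by solving that constraint dimension by dimension. In $2$D the form $\varphi^0_k\mapsto\inprod{\vn_k^0,\mE\vn_k^0}$ equals $a+b\cos 2\varphi^0_k+c\sin 2\varphi^0_k$, which has at most two zeros in $[0,\pi)$ unless it vanishes identically; hence a nonzero $\mE$ exists iff the walls point in at most two distinct directions, i.e.\ all walls parallel, or two parallel families forming a parallelogram possibly extended by parallel walls (class $3$). Conversely, any such two-direction room yields an explicit shear-type $\mA$ with $\norm{\mA\vn_k^0}=1$, furnishing the deformation. In $3$D the analogous zero set $\set{\vn:\inprod{\vn,\mE\vn}=0}$ is a quadric cone (a degree-two spherical harmonic), so non-uniqueness in this regime corresponds to the reference normals lying on such a cone; enumerating the admissible cones, together with the extra freedom when the new room is itself partly degenerate, is exactly what the per-rank six-step analysis of Sections~\ref{sec:classification_2d} and~\ref{sec:classification_3d} carries out, producing the classes of Fig.~\ref{fig:classification}. \textbf{The main obstacle} I anticipate is precisely this $3$D enumeration: unlike the clean ``at most two directions'' dichotomy in $2$D, classifying all unit-normal configurations annihilated by a nonzero symmetric $\mE$, and verifying for each that a genuinely non-rigid room results, requires the careful case split on $r\in\set{1,2,3}$ rather than any single slick argument.
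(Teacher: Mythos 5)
Your reduction of Lemma~\ref{lemma1} to the orthogonality condition $W\perp V$, and the graph-plus-quadratic-form argument in the non-degenerate regime, is a genuinely nice reorganization: in 2D it gives a complete and arguably slicker proof than the paper's rank-by-rank enumeration (counting zeros of $a+b\cos 2\varphi_k^0+c\sin 2\varphi_k^0$ replaces the explicit solution of~\eqref{eq:2d_rank2_ABC}), and forcing $r=m$ by the dimension count is exactly the right way to see why the full-dimensional case reduces to a linear map $\mA$ with $\mE=\mA^\top\mA-\mI$ vanishing on all reference normals. However, in 3D your case decomposition has a genuine hole: you treat (i) waypoints not spanning, (ii) \emph{all walls parallel}, (iii) both normals and waypoints spanning $\R^3$, but this misses the regime where the normals span exactly a \emph{two}-dimensional subspace, i.e.\ all walls are parallel to a common line without being mutually parallel. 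Such prismatic configurations (the paper's Sections~\ref{sec:3drank2} and~\ref{sec:3drank3II}) can have $K\geq 6$ walls and non-coplanar waypoints, yet are never unique --- the PPDM is blind to each waypoint's coordinate along the prism axis --- and they form classes in Fig.~\ref{fig:classification}. Your proof establishes neither direction of the theorem for them: they escape your degenerate constructions, and they violate the spanning hypothesis under which you force $r=m$, so the heart-regime machinery never sees them.

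Beyond that, two steps would fail as written. First, your justification of the $K<6$ case is insufficient: a nontrivial $V^\perp$ does not by itself yield a non-congruent configuration. For instance, with $\vn_k=\vn_k^0$ and spanning normals, $V^\perp=\set{[\vu;-\vu]}$ is nontrivial but the constraint that the first blocks equal the given $\vr_n^0$ forces $\vr_n=\vr_n^0$, i.e.\ the same room. The correct argument is your own quadratic-form count --- $K<6$ linear conditions on the six-dimensional space of symmetric $\mE$ always admit a nonzero solution, after which one takes $\mA=(\mI+t\mE)^{1/2}$ for small $t>0$ to keep $\mI+t\mE$ positive definite (a scaling remark you also need in the heart regime). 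Second, in the degenerate-waypoint cases your ``tilt'' is really only a sign flip: the magnitude of the component of $\vn_k$ perpendicular to the trajectory line (2D) or plane (3D) is pinned by the unit-norm constraint, so only its sign is free; and when at most one wall has a nonzero perpendicular component, flipping it coincides with a global reflection, producing nothing new. Those edge cases require the continuous families that the paper constructs in its rank-3 (2D) and rank-4/5 (3D) analyses. Finally, you explicitly defer the 3D enumeration of normal sets lying on a quadric cone $\inprod{\vn,\mE\vn}=0$ to the paper's six-step analysis (Section~\ref{sec:3d_rank3_0}); since that enumeration \emph{is} the substance of the 3D statement, the 3D half of the theorem remains unproved in your proposal.
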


\begin{figure*}[t]
    \centering
    \includegraphics[width=\linewidth]{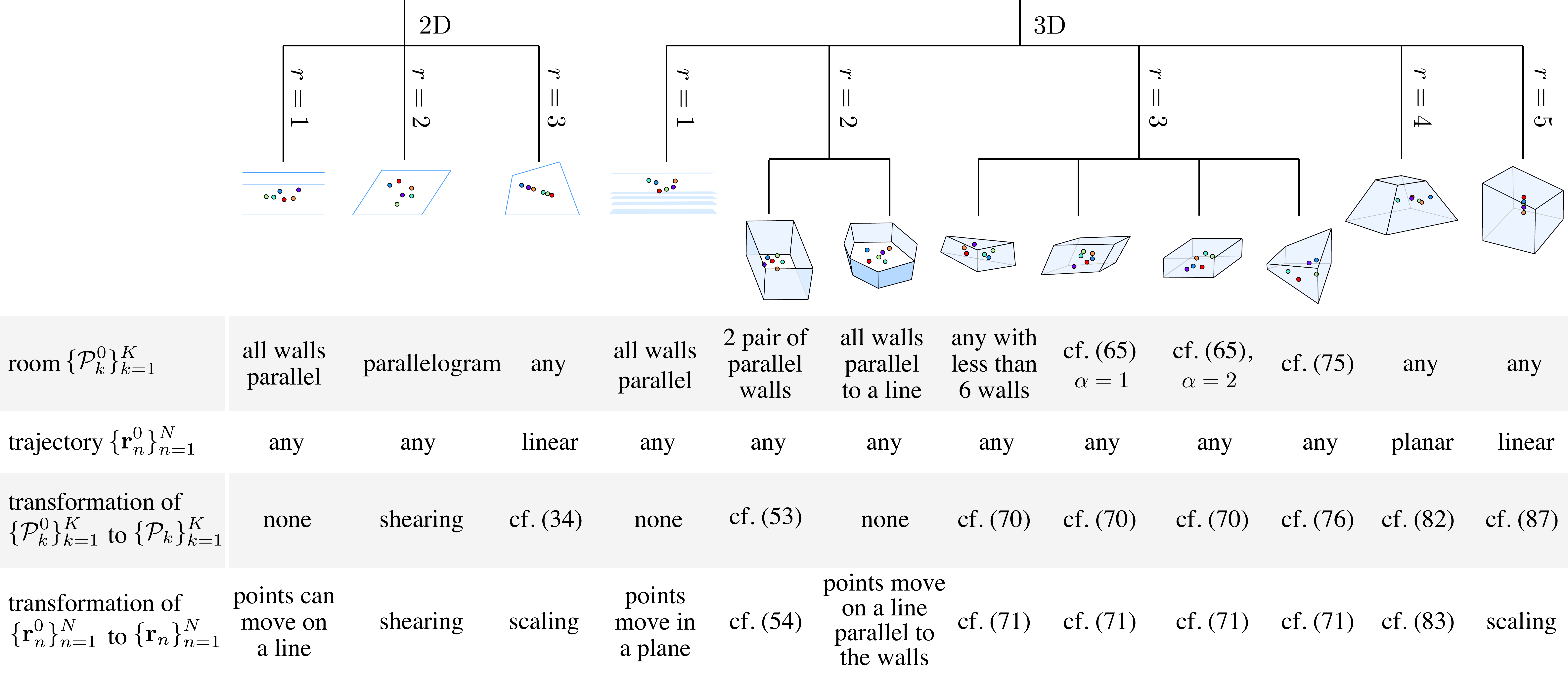}
    \caption{An overview of the equivalence classes of the room-trajectory configurations with respect to PPDMs in $2$D and $3$D. The class generator is denoted by ${\cal R}^0= \left( \set{{\cal P}_k^0}_{k=1}^K, \set{\vr^0_n}_{n=1}^N \right)$ and the equivalent configurations by ${\cal R}= \left( \set{{\cal P}_k}_{k=1}^K, \set{\vr_n}_{n=1}^N \right)$.}
    \label{fig:classification}
\end{figure*}

\section{Classification of 2D configurations}
\label{sec:classification_2d}
We begin by the easier $2$D analysis, i.e. $m=2$. For $\widebar{\mN}^\top$ to have a nullspace, we must have $r \in \set{1, 2, 3}$. For all $r$ the analysis is performed as a sequence of six steps, which we describe in detail for $r=2$.

\subsection{2D rank-2: Parallelogram rooms}

\begin{enumerate}[leftmargin=*, wide, labelwidth=!, labelindent=0pt]

\item \textit{Linear dependence:}
We select $r$ linearly independent columns of $\widebar{\mN}^\top$, denoted $\vc_i \in \mathbb{R}^K$, $i = 1, \hdots, r$, and denote the remaining columns of $\widebar{\mN}^\top$ by $\vc_k \in \mathbb{R}^K$, $k = r+1, \hdots, 2m$. We let $\vc_k$ for $k > r$ be linear combinations of $\vc_k$ for $k \leq r$: 
\begin{align}
\label{eq:system}
    \begin{bmatrix} \vc_{r+1} & \hdots & \vc_{2m} \end{bmatrix}^\top = \mT \begin{bmatrix} \vc_1 & \hdots & \vc_r \end{bmatrix}^\top,
\end{align}
for some $\mT \in \mathbb{R}^{(2m-r) \times r}$.

Concretely, for $r=2$, we assume that the first two columns of $\widebar{\mN}^\top$ are linearly independent, while the third and the fourth column are their linear combinations. We prove in Appendix that this particular choice of columns does not incur a loss of generality in this or any of the other cases. From \eqref{eq:entries_N}, for every $k$ we have that
\begin{align}
\begin{bmatrix}
\cos{\varphi_k}  \\
\sin{\varphi_k} 
\end{bmatrix} = 
\mT
\begin{bmatrix}
\cos{\varphi^0_k}  \\
\sin{\varphi^0_k} 
\end{bmatrix}, \mbox{ where } \mT = \begin{bmatrix}
a & b\\
c & d
\end{bmatrix}.
\label{eq:2d_rank2_definition}
\end{align}

\item \textit{Reparametrization:} 
When $r \leq m$, we can rearrange the columns so that the right-hand side of~\eqref{eq:system} contains the normals of the reference configuration $\mathcal{R}^0$, while the left-hand side has the normals of the putative equivalent configuration $\mathcal{R}$. In particular, we obtain
\begin{align}
\label{eq:subsystem}
    {\mN}^\top = \mT' {\mN^0}^\top,
\end{align}
where $\mT' \in \mathbb{R}^{m \times m}$. 
$\mT'$ can be decomposed as a product $\mT' =\mQ \mU$ of an orthogonal matrix $\mQ$ and an upper triangular matrix $\mU$.
$\mQ$ acts as a rotation and a reflection, so without loss of generality we set $\mQ = \mI$ and $\mT' = \mU$. That is, we assume that the entries of $\mT'$ below the diagonal are $0$, which removes the rotational degrees of freedom.
Since~\eqref{eq:subsystem} contains a subset of equations from~\eqref{eq:system}, we propagate this change back to~\eqref{eq:system} by modifying the corresponding elements of $\mT$.

When $r=m$, the original system of equations~\eqref{eq:2d_rank2_definition} already has a form of~\eqref{eq:subsystem}. Therefore, we only need to set $c = 0$ and obtain an upper triangular matrix,
\begin{align}
\mT = 
\begin{bmatrix}
a & b \\
0 & d \\
\end{bmatrix}.
\end{align}

\item \textit{Reference room:}
To find a reference room, we select an arbitrary $\mT$ (respecting the zero entries from step 2), and solve for the normals satisfying~\eqref{eq:system}. From~\eqref{eq:2d_rank2_definition}, we observe that
\begin{equation}
\label{eq:2d_rank2_squared_sum}
    (a \cos \varphi_k^0 + b \sin \varphi_k^0)^2 + (d \sin \varphi_k^0) ^2 = \cos \varphi_k ^2 + \sin \varphi_k^2 = 1,
\end{equation}
so the angles of the reference room cannot be chosen arbitrarily.
To find the values of $\{\varphi_k^0 \}_{k=1}^K$ with respect to free parameters $a$, $b$ and $d$, we solve~\eqref{eq:2d_rank2_squared_sum} and obtain
\begin{align}
A \cos^2{(2\varphi^0_k)} + B \cos{(2\varphi^0_k)} + C = 0, 
\label{eq:2d_rank2_ABC}
\end{align} 
where 
\begin{equation}
\begin{aligned}
A &= (a^2 - b^2 - d^2)^2 + 4 a^2 b^2,\\
B &= 2 (a^2 - b^2 - d^2)(a^2 + b^2 + d^2 - 2), \\
C &= (a^2 + b^2 + d^2 - 2)^2 - 4 a^2 b^2.
\end{aligned}
\end{equation}

Let first $A = 0$. Then \eqref{eq:2d_rank2_ABC} has two solutions: $a = 0$, $b^2 = -d^2$ and $b = 0$, $a^2 = d^2$. The first one implies that $b = d = 0$, which makes~\eqref{eq:2d_rank2_definition} inconsistent. The second one leads to $\mT$ being a reflection matrix:
\begin{align}
\mT =
\begin{bmatrix}
\pm 1 & 0\\
0 & \pm 1
\end{bmatrix} 
\quad \mbox{or} \quad
\mT =
\begin{bmatrix}
\pm 1 & 0 \\
0 & \mp 1
\end{bmatrix},
\end{align}
which is not of our interest.  

For $A \neq 0$, we have
\begin{align}
\label{eq:2d_rank2_quadratic0}
    \cos {(2 \varphi^0_k)} = \frac{-B \pm \sqrt{B^2 - 4AC}}{2A}.
\end{align}
There are eight solutions for $\varphi^0_k$, four of which satisfy~\eqref{eq:2d_rank2_definition}. 
The valid solutions always come as pairs $(\varphi_1^0, \varphi_2^0) = (\varphi_1^0, \varphi_1^0 + \pi)$ and $(\varphi_3^0, \varphi_4^0) = (\varphi_3^0, \varphi_3^0 + \pi)$.

\item \textit{Equivalent rooms:}
From~\eqref{eq:system}, we identify the transformation that takes the normals of the reference room to the normals of an equivalent room. The corresponding angles in the equivalent room are computed from~\eqref{eq:2d_rank2_definition},
\begin{equation}
    \varphi_k = f(\varphi_k^0, s_k, a, b, d) = \arctan \frac{d \sin \varphi_k^0}{a \cos \varphi_k^0 + b \sin \varphi_k^0} + s_k \pi,
\end{equation}
where $s_k \in \{0, 1\}$.

\item \textit{Equivalence class:} The solutions of~\eqref{eq:2d_rank2_quadratic0} suggest that we can construct a reference room by arbitrarily choosing two wall normals, $\varphi^0_1$ and $\varphi^0_3$, and solving the system of two equations~\eqref{eq:2d_rank2_ABC} with $k \in \{1, 3\}$. This fixes two parameters (e.g., $a$ and $b$) in $\mT$ and leaves the third (e.g., $d$) free to generate an infinite number of rooms equivalent to the reference room. Reference rooms are not restricted to only two walls; we can have any number of additional walls parallel to those determined by $\varphi^0_1$ and $\varphi^0_3$, since they also satisfy~\eqref{eq:2d_rank2_quadratic0}. 

A room with walls $\{\mathcal{P}^0_k\}_{k=1}^K =\{ (\vn^0_k, q^0_k) \}_{k=1}^K$ is a generator of a class of rooms with identical PPDMs, with normals $\vn^0_k$ chosen as described above
and $\vq^0 \in \mathbb{R}^K$. Using $\big[ \cdot \big]$ to denote equivalence classes, the above analysis defines the following equivalence class of rooms with the same PPDMs:
\begin{align}
    \Big[ \{\mathcal{P}^0_k\}_{k=1}^K \Big] = \Big\{  \{\mathcal{P}_k\}_{k=1}^K \, \big| \, \varphi_{k} = f(\varphi_{k}^0, s_k, a, b, c),& \nonumber \\ \, a,b \in \mathbb{R} \mbox { s.t.}~\eqref{eq:2d_rank2_ABC} \mbox{ satisfied}, d \in \mathbb{R}, s_k \in \{0,1\},& \nonumber \\ \, q_k = q^0_k \text{ for } 1 \leq k \leq K &\Big\}.
\end{align}

There are no constraints on the distances of walls from the origin in the reference room and we can set $\vq^0$ arbitrarily. The equivalent room satisfies $\vq = \vq^0$ by Lemma~\ref{lemma1}. We note that this class includes parallelogram rooms for $K=4$, $\varphi^0_2 = \varphi^0_1 + \pi$ and $\varphi^0_4 = \varphi^0_3 + \pi$.

\item \textit{Corresponding trajectories:}
Finally, we find the waypoints $\{\vr_n^0 \}_{n=1}^N$ and $\{ \vr_n \}_{n=1}^N$ that lie in the nullspace of $\widebar{\mN}^\top$.
The nullspace basis can be found as:
\begin{align*}
&\vv_1 =
\begin{bmatrix}
 -a, & -b, & 1, & 0
\end{bmatrix}^\top, \\
&\vv_2 =
\begin{bmatrix}
 0, & -d, & 0, & 1
\end{bmatrix}^\top,
\end{align*}
so the columns of $\widebar{\mR}$ are of the form
\begin{align}
\begin{bmatrix}
\vr_n^0 \\ -\vr_n
\end{bmatrix}=
\vv_1 \gamma_1 +
\vv_2 \gamma_2 
\label{eq:general},
\end{align}
where $\gamma_1, \gamma_2 \in \mathbb{R}$.
The waypoints in the reference room are chosen without restrictions, while the waypoints in the equivalent room are obtained by applying a non-rigid transformation
\begin{align}
    \vr_n^0 = \mT^\top \vr_n.
\end{align}

\end{enumerate}
An example of three parallelogram configurations with the same PPDM is illustrated in Fig.~\ref{fig:case_3}.
\begin{figure}[H]
    \centering
    \includegraphics[width=\linewidth]{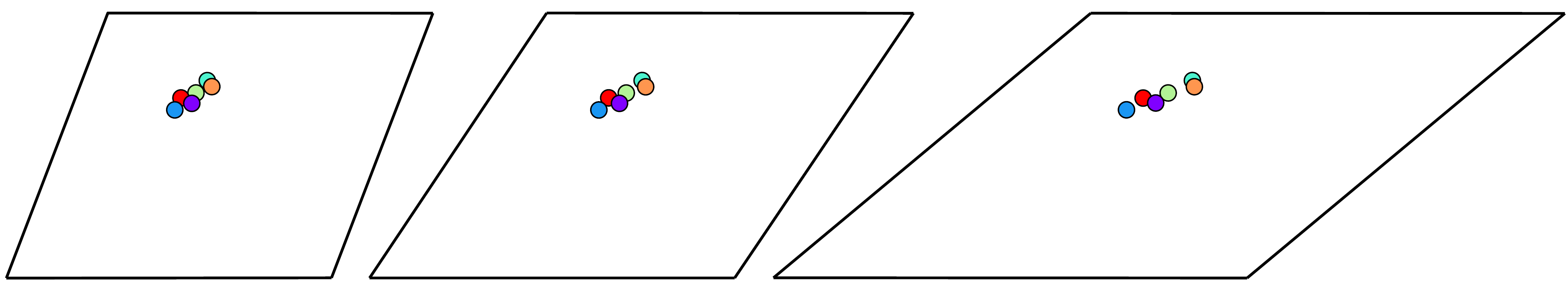}
    \caption{Parallelogram rooms with the same PPDM.}
    \label{fig:case_3}
\end{figure}


\subsection{2D rank-1: Infinitely long corridors}

\begin{enumerate}[leftmargin=*, wide, labelwidth=!, labelindent=0pt]

\item \textit{Linear dependence:}
In $2$D, setting $\rank(\widebar{\mN}) = 1$ leads to degenerate rooms. To show that, assume that every column of $\widebar{\mN}^\top$ is a scaled version of the first column,
\begin{align}
\label{eq:2d_rank_1}
\begin{bmatrix}
\sin \varphi_k^0  \\ \cos \varphi_k \\ \sin \varphi_k
\end{bmatrix} = \mT
\cos \varphi_k^0, \mbox{ where } \mT = \begin{bmatrix}
a \\ b \\ c
\end{bmatrix}.
\end{align}

\item \textit{Reparametrization:}
These dependencies can be partially expressed as a transformation of the normals of the reference room to those of the equivalent room. From \eqref{eq:2d_rank_1} we have:
\begin{align}
\begin{bmatrix}
\cos{\varphi_k}  \\
\sin{\varphi_k} 
\end{bmatrix} = 
\mT'
\begin{bmatrix}
\cos{\varphi^0_k}  \\
\sin{\varphi^0_k}
\end{bmatrix}, 
\mbox{ where }
\mT' = \begin{bmatrix}
b & 0 \\
c & 0
\end{bmatrix}.
\end{align}
With $c=0$, $\mT'$ becomes upper triangular. This eliminates rotations and reflections.
Propagating back to $\mT$, we get:
\begin{align}
\label{eq:2d_rank_1_newT}
\mT = \begin{bmatrix}
a, \ b, \ 0
\end{bmatrix}^\top.
\end{align}

\item \textit{Reference room:}
We see that~\eqref{eq:2d_rank_1} constrains the normals of the reference room, since
\begin{align}
\label{eq:2d_rank1_sol}
\tan \varphi_k^0 = a
\end{align}
must hold for every $k$. That is, the wall normals of the reference room cannot be chosen arbitrarily. Letting $s_k \in \{ 0, 1\}$, we summarize both solutions to~\eqref{eq:2d_rank1_sol} as
\begin{align}
\label{eq:2d_rank1_definition}
 \varphi_k^0 = f(s_k, a) =  \arctan a + s_k \pi.
\end{align}
For $K \geq 2$ walls,~\eqref{eq:2d_rank1_definition} implies that every $\varphi_k^0$ can only assume two values. These correspond to parallel walls since $\varphi^0_i = \varphi^0_k + \pi$ for $s_i = 0$ and $s_k=1$. 

\item \textit{Equivalent rooms:}
From~\eqref{eq:2d_rank_1} and~\eqref{eq:2d_rank_1_newT} we have $\varphi_k \in \set{0, \pi}$ and $a^2 + 1 = b^2$.

\item \textit{Equivalence class:}
This trivial case results in the equivalence class of rooms with parallel walls. They are generated by a reference room $\{\mathcal{P}^0_k\}_{k=1}^K$ with the wall normals from~\eqref{eq:2d_rank1_definition} and $\vq^0 \in \mathbb{R}^K$,
\begin{align}
\label{eq:2d_rank1_ec}
 \Big[ \{\mathcal{P}^0_k\}_{k=1}^K \Big] = \Big\{  \{\mathcal{P}_k\}_{k=1}^K \, \big| \, \varphi_k \in \set{0, \pi}, q_k = q^0_k,&\nonumber \\ \, \text{for } 1 \leq k \leq K &\Big\}.
\end{align}

\item \textit{Corresponding trajectories:}
Though all rooms in this class have the same geometry, there are infinitely many trajectories that lead to the same PPDM. To see this, imagine an infinite corridor with two parallel walls. The points on any line parallel to the walls cannot be discriminated from distances to walls.
Formally, a basis for the nullspace of $\widebar{\mN}^\top$ is 
\begin{align*}
    \vv_1 = \begin{bmatrix}
    -a \\ 1 \\ 0 \\ 0
    \end{bmatrix}, 
    \vv_2 = \begin{bmatrix}
    -b \\ 0 \\ 1 \\ 0
    \end{bmatrix},
    \vv_3 = \begin{bmatrix}
    0 \\ 0 \\ 0 \\ 1
    \end{bmatrix},
\end{align*}
so the columns of $\widebar{\mR}$ have to be of the form
\begin{align}
\label{eq:2d_rank_1_waypoints}
    \begin{bmatrix}
    \vr_n^0 \\
    -\vr_n
    \end{bmatrix} = \gamma_1 \vv_1 + \gamma_2 \vv_2 + \gamma_3 \vv_3,
\end{align}
where $\gamma_1, \gamma_2$ and $\gamma_3 \in \mathbb{R}$. This further implies that the waypoints of the reference room $\left\{ \vr^0_n \right\}_{n=1}^N$ and the $y$ coordinates of $\left\{ \vr_n \right\}_{n=1}^N$ in the equivalent rooms are independent and the latter can be chosen arbitrarily. The $x$ coordinates of $\left\{ \vr_n \right\}_{n=1}^N$ are given by~\eqref{eq:2d_rank_1_waypoints}. Fig.~\ref{fig:case_1} shows three equivalent configurations that emerge from this case.
\end{enumerate}

\begin{figure}[H]
    \centering
    \includegraphics[width=\linewidth]{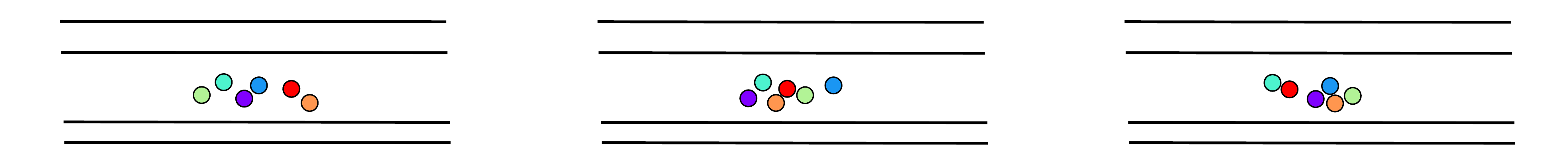}
    \caption{Example of three equivalent infinitely long corridors.}
    \label{fig:case_1}
\end{figure}

\subsection{2D rank-3: Linear trajectories}

\begin{enumerate}[leftmargin=*, wide, labelwidth=!, labelindent=0pt]

\item \textit{Linear dependence:}
We assume $\rank(\widebar{\mN}) = 3$ so that
\begin{align}
\label{eq:2d_rank_3}
\cos{\varphi_k}  = 
\mT \begin{bmatrix}
\sin{\varphi_k}  \\
\cos{\varphi^0_k} \\
\sin{\varphi^0_k}
\end{bmatrix}, \mbox{ where } \mT = \begin{bmatrix}
a & b & c\\
\end{bmatrix}.
\end{align}

\item \textit{Reparametrization:}
As $r > m$, we cannot rewrite~\eqref{eq:2d_rank_3} such that the wall normals of $\mathcal{R}$ and $\mathcal{R}^0$ are on the opposite sides of the equation, so we omit this step.

\item \textit{Reference room:}
From~\eqref{eq:2d_rank_3}, we observe that the wall orientations of the reference room are unconstrained.

\item \textit{Equivalent rooms:}
We can express the wall orientations $\varphi_k$ in the equivalent room as a function of $\varphi_k^0$ and entries in $\mT$,
\begin{equation}
\varphi_k = f(\varphi_k^0, s_k, a, b, c) = s_k \arccos  \frac{b \cos \varphi_k^0 + c \sin \varphi_k^0}{\sqrt{a^2+1}} - \arctan\!a,
\end{equation}
where $s_k \in \{ -1, 1\}$.

\item \textit{Equivalence class:}
An arbitrary room $\{\mathcal{P}^0_k\}_{k=1}^K$ with $K$ walls generates the following equivalence class:
\begin{align}
    \Big[ \{\mathcal{P}^0_k\}_{k=1}^K \Big] = \Big\{ \{\mathcal{P}_k\}_{k=1}^K \big| & \; \varphi_k = f(\varphi_k^0, s_k,a,b, c), \nonumber \\ & \; a,b,c \in \mathbb{R}, s_k \in \{-1,1\}, \nonumber \\ & \; q_k = q^0_k, \text{ for } 1 \leq k \leq K  \Big\}.
\end{align}

\item \textit{Corresponding trajectories:}
The nullspace of $\widebar{\mN}^\top$ is spanned by one vector
$\vv_1 = \begin{bmatrix} -b & -c & 1 & -a \end{bmatrix}^\top$, so the columns of $\widebar{\mR}$ satisfy
\begin{equation}
\begin{bmatrix} \vr_n^0 \\ -\vr_n \end{bmatrix} = \vv_1 \gamma,
\end{equation}
where $\gamma \in \mathbb{R}$. This further suggests that the $x$ and $y$ coordinates of the waypoints in both rooms are dependent, and the trajectories are linear. 

\end{enumerate}

In other words, for any arbitrary room with $K$ walls and a PPDM measured at collinear waypoints, we can find another room with the same PPDM obtained at different collinear waypoints; an example is shown in Fig.~\ref{fig:case_2}.

\begin{figure}[H]
    \centering
    \includegraphics[width=\linewidth]{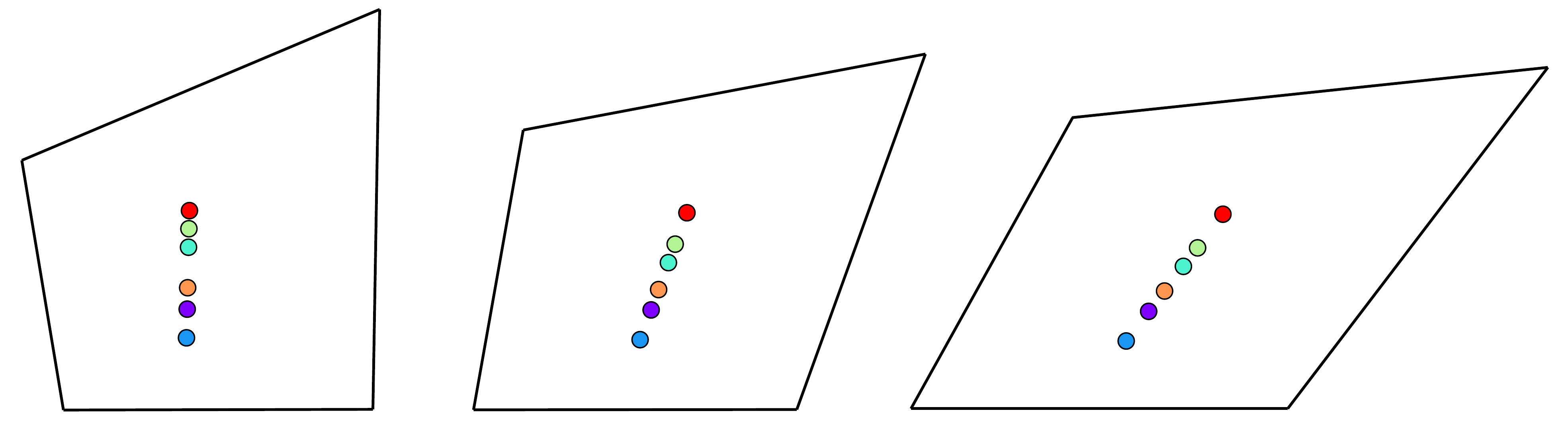}
    \caption{Example of equivalent rooms with linear trajectories.}
    \label{fig:case_2}
\end{figure}


\section{Classification of 3D configurations}
\label{sec:classification_3d}
In $3$D $(m = 3)$ we analyze the cases $r\in \{1, 2, 3, 4, 5\}$.

\subsection{3D rank-1: Infinitely long and tall corridors}

\begin{enumerate}[leftmargin=*, wide, labelwidth=!, labelindent=0pt]

\item \textit{Linear dependence:}
When $\rank(\widebar{\mN}) = 1$ in $3$D, five columns of $\widebar{\mN}^\top$ are scaled version of a single non-zero column,
\begin{align}
\label{eq:3d_rank1_a}
\begin{bmatrix}
\sin\theta_k^0 \sin\varphi^0_k \\
\cos\theta_k^0 \\
\sin\theta_k \cos\varphi_k \\
\sin\theta_k \sin\varphi_k \\
\cos\theta_k \\
\end{bmatrix} =
\mT \sin\theta_k^0 \cos\varphi^0_k, \text{where }
\mT = \begin{bmatrix}
a \\ b \\ c \\ d \\ e
\end{bmatrix}.
\end{align}

\item \textit{Reparametrization:}
The requirement \eqref{eq:3d_rank1_a} implies the following relationship between the wall normals of the reference room and those of the equivalent room:
\begin{align}
\begin{aligned}
\label{eq:3d_rank1_definition}
\begin{bmatrix}
\sin\theta_k \cos\varphi_k  \\ \sin\theta_k \sin\varphi_k \\ \cos\theta_k
\end{bmatrix} =
\begin{bmatrix}
c & 0 & 0 \\
d & 0 & 0 \\
e & 0 & 0
\end{bmatrix}
\begin{bmatrix}
\sin\theta^0_k \cos\varphi^0_k  \\ \sin\theta^0_k \sin\varphi^0_k \\ \cos\theta^0_k
\end{bmatrix},
\end{aligned}
\end{align}
As before, we set $d = e = 0$ to get an upper triangular matrix.

\item \textit{Reference room:}
From~\eqref{eq:3d_rank1_a}, it follows that
\begin{align}
\tan \varphi_k^0 = a \quad \mbox{and} \quad
\tan \theta_k^0 = \frac{1}{b \cos \varphi_k^0}
\end{align} 
for every $k$. 
Then,
\begin{equation}
    \label{eq:3d_rank1_angles}
    \varphi_k^0 = \arctan a + s_k \pi, \quad 
    \theta_k^0 =  \arctan \frac{1}{b \cos \varphi_k^0} + t_k \pi,
\end{equation}
where $s_k$, $t_k \in \{0, 1\}$ are independent binary variables. That is, the reference room cannot be chosen arbitrarily; the angles can only assume two values that yield parallel walls.

\item \textit{Equivalent rooms:}
From~\eqref{eq:3d_rank1_a} we also find that $\sin \varphi_k = 0$ and $\cos \theta_k = 0$, so the angle $\theta_k$ takes a value of $\pi/2$, while $\varphi_k$ is either $0$ or $\pi$.
Furthermore, for~\eqref{eq:3d_rank1_a} to be consistent, 
\begin{align}
    c = \frac{\sin \theta_k \cos \varphi_k}{\sin \theta_k^0 \cos \varphi_k^0}.
\end{align}

\item \textit{Equivalence class:}
An equivalence class of these degenerate rooms with parallel walls with respect to PPDM is generated by a reference room $\{\mathcal{P}^0_k\}_{k=1}^K$ with the wall normals from~\eqref{eq:3d_rank1_angles} and $\vq^0 \in \mathbb{R}^K$,
\begin{align}
    \Big[ \{\mathcal{P}^0_k\}_{k=1}^K \Big] = \Big\{  \{\mathcal{P}_k\}_{k=1}^K \, \big| &\, \theta_k = \pi/2, \; \varphi_k \in \{0, \pi\}, \nonumber \\ &\, q_k = q^0_k, \text{ for } 1 \leq k \leq K \Big\}.
\end{align}

\item \textit{Corresponding trajectories:}
Analogously to the $\rank$-$1$ case in $2$D, the ambiguity in the reconstruction is due to the multitude of consistent trajectories. Points in planes parallel to the walls cannot be uniquely determined from distances to the walls.
The nullspace of $\widebar{\mN}^\top$ is spanned by five vectors,
\begin{align*}
\resizebox{\hsize}{!}{%
    $\vv_1 = \begin{bmatrix}
    -a \\ 1 \\ 0 \\ 0 \\ 0 \\ 0
    \end{bmatrix}, 
    \vv_2 = \begin{bmatrix}
    -b \\ 0 \\ 1 \\ 0 \\ 0 \\ 0
    \end{bmatrix},
    \vv_3 = \begin{bmatrix}
    -c \\ 0 \\ 0 \\ 1 \\ 0 \\ 0
    \end{bmatrix},
    \vv_4 = \begin{bmatrix}
    0 \\ 0 \\ 0 \\ 0 \\ 1 \\ 0
    \end{bmatrix},
    \vv_5 = \begin{bmatrix}
    0 \\ 0 \\ 0 \\ 0 \\ 0 \\ 1
    \end{bmatrix}, $}
\end{align*} 
so the columns of $\widebar{\mR}$ are
\begin{align}
\label{eq:3d_rank_1_waypoints}
    \begin{bmatrix}
    \vr_n^0 \\
    -\vr_n
    \end{bmatrix} = \gamma_1 \vv_1 + \gamma_2 \vv_2 + \gamma_3 \vv_3 + \gamma_4 \vv_4 + \gamma_5 \vv_5,
\end{align}
where $\gamma_1, \gamma_2, \gamma_3, \gamma_4$ and $\gamma_5 \in \mathbb{R}$.
This implies that the waypoints $\left\{ \vr^0_n \right\}_{n=1}^N$ in the reference room and the $y$ and $z$ coordinates of $\left\{ \vr_n \right\}_{n=1}^N$ in the equivalent rooms are independent and can be chosen arbitrarily, whereas the $x$ coordinates of $\left\{ \vr_n \right\}_{n=1}^N$ are given by~\eqref{eq:3d_rank_1_waypoints}. An example of such room-trajectory configurations is shown in Fig.~\ref{fig:case_1_3D}.
\end{enumerate}

\begin{figure}[H]
    \centering
    \includegraphics[width=\linewidth]{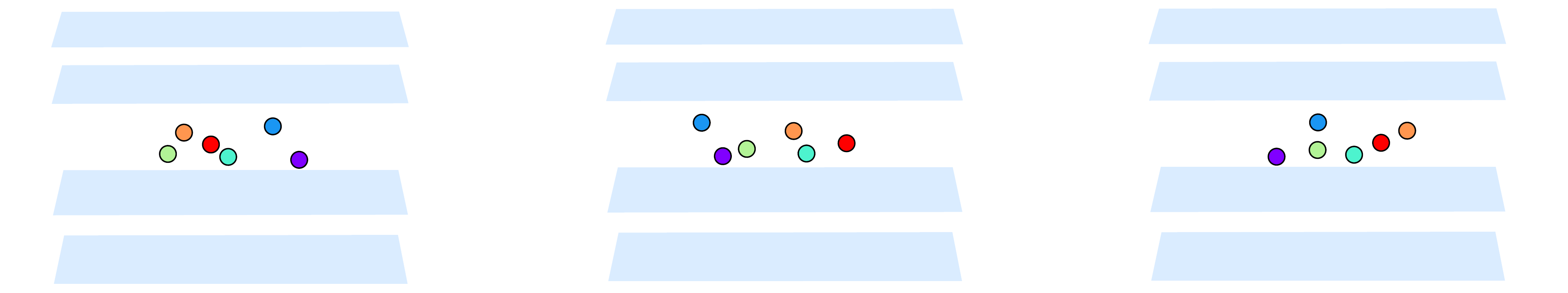}
    \caption{Three equivalent infinitely long and tall corridors.}
    \label{fig:case_1_3D}
\end{figure}

\subsection{3D rank-2: Parallelepipeds without bases}
\label{sec:3drank2-I}

\begin{enumerate}[leftmargin=*, wide, labelwidth=!, labelindent=0pt]

\item \textit{Linear dependence:}
Assume that the first and the second column are linearly independent, and the others are their linear combinations. Thus, for every wall $k$ we have
\begin{align}
\label{eq:3d_rank2_definition}
\begin{bmatrix}
\cos\theta^0_k  \\
\sin\theta_k \cos\varphi_k \\
\sin\theta_k \sin\varphi_k \\
\cos\theta_k 
\end{bmatrix} = 
\mT 
\begin{bmatrix}
\sin\theta^0_k \cos\varphi^0_k \\
\sin\theta^0_k \sin\varphi^0_k 
\end{bmatrix}, \text{ where }
\mT = 
\begin{bmatrix}
a & b \\
c & d \\
e & f \\
g & h
\end{bmatrix}.
\end{align}

\item \textit{Reparametrization:}
As before, \eqref{eq:3d_rank2_definition} implies a relationship between the normals of the reference and the equivalent room,
\begin{align}
\label{eq:3d_rank2_reparametrization}
\begin{bmatrix}
\sin\theta_k \cos\varphi_k \\
\sin\theta_k \sin\varphi_k \\
\cos\theta_k 
\end{bmatrix} = 
\begin{bmatrix}
c & d & 0 \\
e & f & 0 \\
g & h & 0
\end{bmatrix}
\begin{bmatrix}
\sin\theta^0_k \cos\varphi^0_k \\
\sin\theta^0_k \sin\varphi^0_k  \\
\cos \theta^0_k
\end{bmatrix}.
\end{align}
By setting $e, g$ and $h$ to $0$, we obtain the desired upper triangular matrix and propagate this change into $\mT$,
\begin{align}
\mT = 
\begin{bmatrix}
a & c & 0 & 0 \\
b & d & f & 0 
\end{bmatrix}^\top.
\end{align}

\item \textit{Reference room:}
The sum of the squares of the last three equations in~\eqref{eq:3d_rank2_definition} has to be $1$ for every wall $k$,
\begin{equation}
   (c \sin \theta_k^0 \cos \varphi_k^0 + d \sin \theta_k^0 \sin \varphi_k^0)^2 + (f \sin \theta_k^0 \sin \varphi_k^0)^2 = 1,
\label{eq:3d_squared_sum}
\end{equation}
so the reference room cannot be chosen arbitrarily.
From~\eqref{eq:3d_squared_sum}, we can express $\theta_k^0$ as a function of $\varphi_k^0$ and the entries of $\mT$.

The first equation in~\eqref{eq:3d_rank2_definition} additionally constrains $\theta_k^0$ and $\varphi_k^0$,
\begin{align}
\label{eq:3d_rank2_first}
    \tan \theta_k^0 = (a \cos \varphi_k^0 + b \sin \varphi_k^0)^{-1}.
\end{align}
We obtain a quadratic equation with respect to $\cos (2 \varphi_k^0)$,
\begin{equation}
\label{eq:3d_ABC}
    (A^2 + B^2) \cos^2 (2 \varphi_k^0) - 2 A C \cos (2 \varphi_k^0) + (C^2 - B^2) = 0,
\end{equation}
where
\begin{align}
\begin{aligned}
\label{eq:3d_rank2_definitionsABC}
    A &= -a^2 + b^2 + c^2 - d^2 - f^2, \\
    B &= 2 (a b - c d), \\
    C &= a^2 + b^2 - c^2 - d^2 - f^2 + 2.
\end{aligned}
\end{align}

We first assume $A^2 + B^2 \neq 0$ and solve~\eqref{eq:3d_ABC} for $\varphi^0_k$,
\begin{align}
\label{eq:3d_rank2_quadratic0}
    \cos {(2 \varphi^0_k)} = \frac{AC \pm \sqrt{A^2 C^2 - (A^2 + B^2)(C^2 - B^2)}}{A^2+B^2}.
\end{align}
We obtain four solutions for $\varphi_k^0$ to~\eqref{eq:3d_ABC} that satisfy~\eqref{eq:3d_rank2_definition}.
For each value of $\varphi_k^0$ we can find the corresponding $\theta_k^0$ from~\eqref{eq:3d_squared_sum} or~\eqref{eq:3d_rank2_first}.
Valid solutions always generate two pairs of wall normals:
$\{ \theta_k^0, \varphi_k^0 \}_{k=1}^2 = \{ (\theta_1^0, \varphi_1^0),  (-\theta_1^0, \varphi_1^0 + \pi) \}$ and $\{ \theta_k^0, \varphi_k^0 \}_{k=3}^4 = \{ (\theta_3^0, \varphi_3^0), (-\theta_3^0, \varphi_3^0 + \pi)\}$. Therefore, each reference room is made of two arbitrarily chosen walls and two walls parallel to them, resulting in parallelepipeds without its two bases.

As the case of $A^2 + B^2 = 0$ results in rather different geometries, it is analyzed separately in Section~\ref{sec:3drank2}.

\item \textit{Equivalent rooms:}
The corresponding angles in the equivalent room are computed from~\eqref{eq:3d_rank2_definition},
\begin{align}
\begin{aligned}
\label{eq:3d_rank2_transformation}
    \theta_k = \pi/2, \ \
    \varphi_k = g(\theta_k^0, \varphi_k^0, c,d,f) + s_k \pi,
\end{aligned}
\end{align}
where $s_k \in \{ 0,1\}$ and
\begin{align}
g(\theta_k^0, \varphi_k^0, c,d,f) = \arctan \frac{f \sin \theta_k^0 \sin \varphi_k^0}{c \sin \theta_k^0 \cos \varphi_k^0 + d \sin \theta_k^0 \sin \varphi_k^0}.
\end{align}

\item \textit{Equivalence class:}
We can set two wall orientations of a reference room by arbitrarily choosing $\varphi_1^0$ and $\varphi_3^0$, and by computing $\theta_1^0$ and $\theta_3^0$ from~\eqref{eq:3d_rank2_first}. 
By solving the system of two equations~\eqref{eq:3d_ABC} with $k \in \{1, 3\}$, we fix two parameters (e.g., $c$ and $d$) and leave the third parameter (e.g., $f$) free to generate new rooms equivalent to the reference. Walls parallel to those defined by $(\theta^0_1, \varphi^0_1)$ and $(\theta^0_3, \varphi^0_3)$ are determined by $(-\theta^0_1, \varphi^0_1 + \pi)$ and $(-\theta^0_3, \varphi^0_3 + \pi)$. Recall that the solutions of~\eqref{eq:3d_ABC} always come in pairs $(\theta_k^0, -\theta_k^0)$ and $(\varphi_k^0, \varphi_k^0 + \pi)$, so adding walls parallel to the two fixed ones does not violate~\eqref{eq:3d_ABC}.

As usual, we can choose $\vq^0 \in \mathbb{R}^K$ arbitrarily, and define an equivalence class of rooms generated by $\{\mathcal{P}^0_k\}_{k=1}^K$ as
\begin{align}
    \Big[ \{\mathcal{P}^0_k\}_{k=1}^K \Big] = \Big\{  \{\mathcal{P}_k\}_{k=1}^K \, \big| \, \varphi_k = g(\theta_k^0, \varphi_k^0, c,d,f) + s_k \pi,  &\nonumber \\ \, \theta_k = \pi/2, s_k \in \{ 0,1\}, q_k = q^0_k, f \in \mathbb{R}, &\nonumber \\ \, c, d \in \mathbb{R} \mbox { s.t.}~\eqref{eq:3d_ABC} \mbox{ satisfied}, \text{ for } 1 \leq k \leq K &\Big\}.
\end{align}

\item \textit{Corresponding trajectories:}
The nullspace of $\widebar{\mN}^\top$ is spanned by four vectors,
\begin{align*}
    \vv_1 = \begin{bmatrix}
    -a \\ -b \\ 1 \\ 0 \\ 0 \\ 0
    \end{bmatrix}, 
    \vv_2 = \begin{bmatrix}
    -c \\ -d \\ 0 \\ 1 \\ 0 \\ 0
    \end{bmatrix},
    \vv_3 = \begin{bmatrix}
    0 \\ -f \\ 0 \\ 0 \\ 1 \\ 0
    \end{bmatrix},
    \vv_4 = \begin{bmatrix}
    0 \\ 0 \\ 0 \\ 0 \\ 0 \\ 1
    \end{bmatrix},
\end{align*} 
so the waypoints in $\widebar{\mR}$ are related as
\begin{align}
\label{eq:3d_rank_2_waypoints}
    \begin{bmatrix}
    \vr_n^0 \\
    -\vr_n
    \end{bmatrix} = \gamma_1 \vv_1 + \gamma_2 \vv_2 + \gamma_3 \vv_3 + \gamma_4 \vv_4,
\end{align}
where $\gamma_1, \gamma_2, \gamma_3$ and $\gamma_4 \in \mathbb{R}$. It follows that the waypoints of the reference room are independent and can be chosen arbitrarily, whereas the corresponding waypoints of the equivalent rooms are given by~\eqref{eq:3d_rank_2_waypoints}.
An example is illustrated in Fig.~\ref{fig:case_2_3D}.
\end{enumerate}

\begin{figure}[H]
    \centering
    \includegraphics[width=\linewidth]{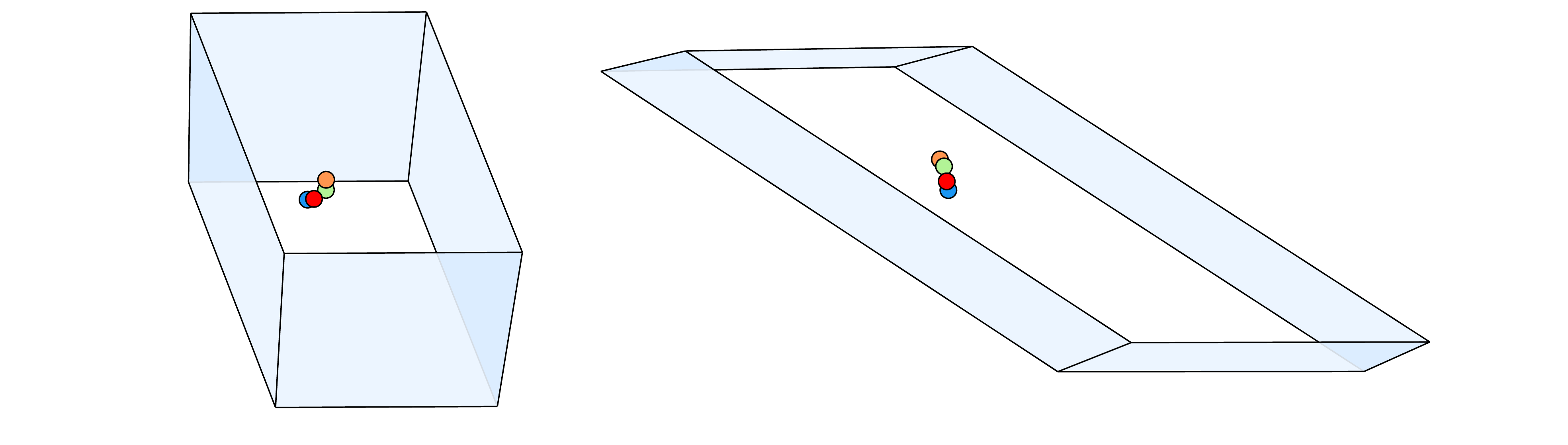}
    \caption{Two ``hollow parallelepipeds'' with the same PPDM.}
    \label{fig:case_2_3D}
\end{figure}

\subsection{3D rank-2: Prisms without bases}
\label{sec:3drank2}

In step 3 of the previous case, we studied $A^2 + B^2 \neq 0$. Now we focus on $A^2 + B^2 = 0$ and omit steps $1$, $2$ and $6$ as they are identical to Section~\ref{sec:3drank2-I}.

\begin{enumerate}[leftmargin=*, wide, labelwidth=!, labelindent=0pt]
  \setcounter{enumi}{2}

\item \textit{Reference room:}
The case of $A^2 + B^2 = 0$ leads to $A = B = C=0$ and~\eqref{eq:3d_ABC} being satisfied for any value of $\varphi_k^0$.
By solving $A = B = C = 0$, we find explicit expressions for three dependent parameters in $\mT$,

\begin{align}
\begin{aligned}
\label{eq:3d_rank2_cdf}
c = \pm \sqrt{a^2+1},\
d = \pm \frac{ab}{\sqrt{a^2+1}}, \
f = \pm \sqrt{b^2 - \frac{a^2 b^2}{a^2+1} + 1}.
\end{aligned}
\end{align}

Then, from arbitrarily chosen angles $\varphi_k^0$, and the parameters in $\mT$ that satisfy~\eqref{eq:3d_rank2_cdf}, we compute $\theta_k^0$ from~\eqref{eq:3d_squared_sum} or~\eqref{eq:3d_rank2_first}. Such a room consists of $K$ walls parallel to a fixed line; this means that every triplet of walls forms a prismatic surface, or equivalently, every wall intersects the other two along lines.

To see this, observe that the rank of the coefficient matrix $\mN^0$ is $2$, while the rank of the augmented matrix $\mM^0$,
\begin{align}
\label{eq:augmented_matrix}
{\mM^0}^\top=
\begin{bmatrix}
    {\mN^0}^\top & \vq
\end{bmatrix},
\end{align}
can be $2$ or $3$. Indeed, the coefficient matrix from~\eqref{eq:3d_rank2_first} is
\begin{align}
    \vn_k^0 =
    \frac{1}{\sqrt{1 + (a \cos \theta_k^0 + b \sin \theta_k^0)^2}}
    \begin{bmatrix}
    \cos \varphi_k^0 \\
    \sin \varphi_k^0 \\
    a \cos \varphi_k^0 + b \sin \varphi_k^0
    \end{bmatrix}.
\end{align}
The third row of ${\mN^0}^\top$ is a linear combination of the first two rows so $\rank(\mN^0) = 2$.
From \eqref{eq:augmented_matrix} it follows that $\rank (\mM^0) = 3$, except for a set of $\vq$ of Lebesgue measure zero. A specific case of $\rank({\mM^0}) = 2$ occurs when the values of $\vq$ are chosen so that all walls intersect in one line. 

\item \textit{Equivalence rooms:}
The angles of the equivalent room $\theta_k$ and $\varphi_k$ are computed from~\eqref{eq:3d_rank2_transformation}.
We show that the equivalent room is a rotated version of the reference room.

The rotation ambiguity exists despite the reparametrization in step 2 because the normals in any equivalent room lie in a plane (the $xy$-plane in the reference room).
Then, transformation of the normals of $\{ \mathcal{P}^0_k \}_{k=1}^K$ to those of $\{ \mathcal{P}_k \}_{k=1}^K$ is determined by two angles, instead of three for a general rotation. We can factor any upper triangular matrix into a product of a rotation matrix around two axes and a square matrix by two Givens rotations \cite{givens1958computation}. Thus, $\mT$ being upper-triangular still allows for rotations specified by two angles. 

We introduce a matrix $\mR = (r_{ij})_{i,j=1}^3$ such that
\begin{align}
\begin{bmatrix}
\sin\theta_k \cos\varphi_k \\
\sin\theta_k \sin\varphi_k \\
\cos\theta_k 
\end{bmatrix} = \mR
\begin{bmatrix}
\sin\theta^0_k \cos\varphi^0_k \\
\sin\theta^0_k \sin\varphi^0_k  \\
\cos \theta^0_k
\end{bmatrix} \text{ for } 1 \leq k \leq K.
\end{align}
Together with~\eqref{eq:3d_rank2_reparametrization}, we obtain
\begin{align}
\begin{aligned}
\label{eq:3d_rank2_equalities}
    c &= r_{11} + a r_{13}, \quad
    d = r_{12} + b r_{13}, \quad
    0 = r_{21} + a r_{23}, \\
    f &= r_{22} + b r_{23}, \quad
    0 = r_{31} + a r_{33}, \quad
    0 = r_{32} + b r_{33},
\end{aligned}
\end{align}
so we can rewrite $\mR$ as
\begin{align}
\begin{aligned}
\label{eq:3d_rank2_rotation}
    \mR &=
    \begin{bmatrix}
    c - a r_{13} & d - b r_{13} & r_{13} \\
    - a r_{23} & f - b r_{23} & r_{23} \\
    - a r_{33} & - b r_{33} & r_{33}
    \end{bmatrix} = \\ &=
    \begin{bmatrix}
    c & d & 0 \\
    0 & f & 0 \\
    0 & 0 & 0
    \end{bmatrix}  - \begin{bmatrix}
    r_{13} \\ r_{23} \\ r_{33}
    \end{bmatrix} \begin{bmatrix}
    a & b & -1
    \end{bmatrix}.
\end{aligned}
\end{align}
To see that $\mR$ is a rotation, note that from $A = B = C = 0$, \eqref{eq:3d_rank2_rotation}, and \eqref{eq:3d_rank2_definitionsABC}, the columns of $\mR$ are orthonormal. 

\end{enumerate}

The dependence of the corresponding waypoints is given in~\eqref{eq:3d_rank_2_waypoints} with an additional constraint on the parameters in~\eqref{eq:3d_rank2_cdf}. Intuitively, any waypoint that lies on a line parallel to walls generates the same PPDM. Thus the two equivalent rooms in the $\rank$-$2$ case in $3$D with $A^2 + B^2 = 0$ have identical geometries, but could have different waypoints lying on a line parallel to all walls, see Fig.~\ref{fig:case_20_3D}.
\begin{figure}[H]
    \centering
    \includegraphics[width=\linewidth]{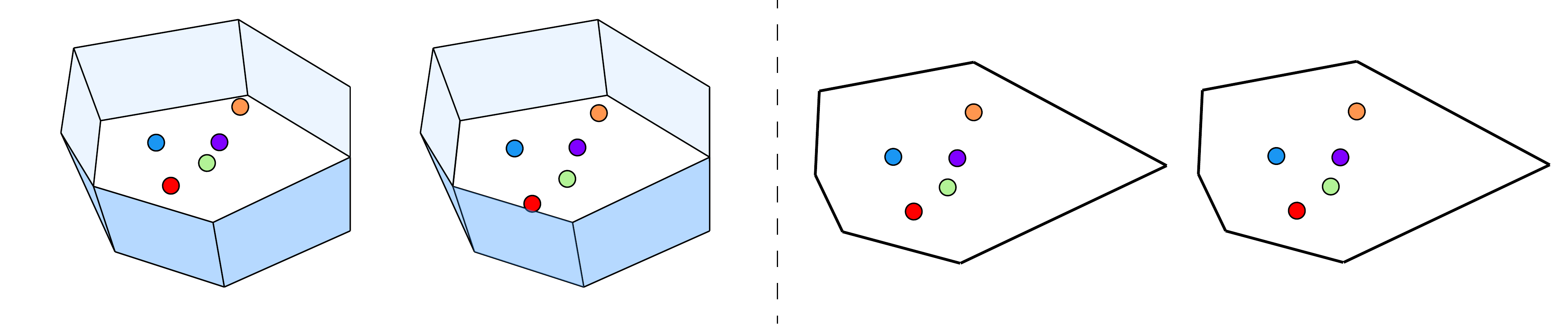}
    \begin{subfigure}{0.49\linewidth}
    \centering
    \caption{ }
    \end{subfigure}
    \begin{subfigure}{0.49\linewidth}
    \centering
    \caption{ }
    \end{subfigure}
    \caption{Two ``hollow prisms'' with the same PPDM. (a) The rooms are identical, but the waypoints differ. (b) A bird's eye view. The configurations from this angle seem identical.}
    \label{fig:case_20_3D}
\end{figure}

\subsection{3D rank-3: Miscellaneous geometries}
\label{sec:3d_rank3_0}
\begin{enumerate}[leftmargin=*, wide, labelwidth=!, labelindent=0pt]

\item \textit{Linear dependence:}
The practically relevant shoebox rooms generate configurations not uniquely determined by PPDMs. For $\rank(\widebar{\mN}) = 3$,
\begin{align}
\begin{aligned}
\label{eq:3d_rank3_definition}
\begin{bmatrix}
\sin\theta_k \cos\varphi_k  \\ \sin\theta_k \sin\varphi_k \\ \cos\theta_k
\end{bmatrix} =
\mT
\begin{bmatrix}
\sin\theta^0_k \cos\varphi^0_k  \\ \sin\theta^0_k \sin\varphi^0_k \\ \cos\theta^0_k
\end{bmatrix},
\end{aligned}
\end{align}
where
\begin{align}
\mT = 
\begin{bmatrix}
a & b & c \\
d & e & f \\
g & h & i
\end{bmatrix}.
\end{align}

\item \textit{Reparametrization:}
As usual, we make $\mT$ upper triangular matrix by setting $d, g$ and $h$ to $0$.

\item \textit{Reference room:}
Since in~\eqref{eq:3d_rank3_definition} we have three equations with four angles for every wall $k$, we can express $\theta^0_k, \theta_k$ and $\varphi_k$ in terms of an arbitrarily chosen angle $\varphi^0_k$ and the parameters in $\mT$.
Squaring and summing \eqref{eq:3d_rank3_definition} gives
\begin{equation}
    \begin{aligned}
\label{eq:3d_rank3_theta0}
0 = \,&(A^2 + B^2) \sin^2(2 \theta_k^0) + 2(A + 2C) B \sin(2 \theta_k^0) \\&+ 4C (A + C),
\end{aligned}
\end{equation}
where
\begin{equation}
\begin{aligned}
A = \,&a^2 \cos^2\varphi^0_k + (b^2 + e^2)   \sin^2\varphi^0_k \\ &+ 2 a b \sin\varphi^0_k \cos\varphi^0_k - C-1, \\ 
B = \,&a c \cos\varphi^0_k + (b c + e f) \sin\varphi^0_k, \\
C = \,&c^2 + f^2 + i^2 - 1.      
\end{aligned}
\end{equation}
To find $\theta_k^0$, we solve~\eqref{eq:3d_rank3_theta0} and obtain
\begin{align}
\label{eq:3d_rank3_cos}
    \cos(2 \theta_k^0) = x_1 \quad \mbox{or} \quad \cos(2 \theta_k^0) = x_2,
\end{align}
with
\begin{align}
\label{eq:3d_rank3_x12}
    x_{1,2} = \frac{A(A+2C) \pm B\sqrt{B^2 - 4AC - 4 C^2}}{A^2 + B^2}.
\end{align}

We first consider $A^2 + B^2 \neq 0$, while the case of $A^2 + B^2 = 0$ is analyzed separately in Section~\ref{sec:3drank3II}.
Analogously to the $\rank$-$2$ case in $2$D or $3$D, not all solutions to~\eqref{eq:3d_rank3_cos} satisfy~\eqref{eq:3d_rank3_definition}; the four valid values of $\theta_k^0$ are identified by verifying
\begin{align}
\label{eq:3d_rank3_quadratic}
    1 = (a \sin \theta_k^0 \cos \varphi_k^0 + b \sin \theta_k^0 \sin \varphi_k^0 + c \cos \theta_k^0)^2&  \nonumber \\+ (e \sin \theta_k^0 \sin \varphi_k^0 + f \cos \theta_k^0)^2 + i^2 \cos^2 \theta_k^0&
\end{align}
for $1 \leq k \leq K$.
Contrary to the $\rank$-$2$ case in $2$D or $3$D, the values of $A, B$ and $C$ in~\eqref{eq:3d_rank3_cos} depend on $\varphi_k^0$ and the solutions to~\eqref{eq:3d_rank3_cos} vary for different walls $k$.
We denote them $\theta^0_{k,1}$, $\theta^0_{k,2}$, $\theta^0_{k,3}$ and $\theta^0_{k,4}$, where $\theta^0_{k,1}$ and $\theta^0_{k,2}$ are computed from $x_1$, while $\theta^0_{k,3}$ and $\theta^0_{k,4}$ from $x_2$. Moreover, they satisfy $\theta^0_{k,2} = \theta^0_{k,1} + \pi$ and $\theta^0_{k,4} = \theta^0_{k,3} + \pi$.

There are infinitely many ways to arrange the walls of the reference room, since for a fixed value of $\varphi_k^0$ there are four values of $\theta_k^0$ that satisfy~\eqref{eq:3d_rank3_quadratic}.
On the one hand, for a chosen $\varphi_k^0$ we can pick only one value $\theta_{k, j_k}^0$, $1 \leq j_k \leq 4$, and create wall normals $\{\theta_{k, j_k}^0, \varphi_k^0\}_{k=1}^K$.
Such rooms have different angles for every wall.
On the other hand, some rooms can have one value $\varphi_k^0$ associated to four walls, $\{\theta_{k,1}^0, \varphi_k^0\}_{k=1}^{K/4}$, $\{\theta_{k,2}^0, \varphi_k^0\}_{k=1}^{K/4}$, $\{\theta_{k,3}^0, \varphi_k^0\}_{k=1}^{K/4}$ and $\{\theta_{k,4}^0, \varphi_k^0\}_{k=1}^{K/4}$. As they lead to different shapes and many of them are common in real-world environments, they merit further analysis. The transformation to equivalent rooms is the same for all reference rooms, so we first define the classes, and then focus on reference rooms.

\item \textit{Equivalent rooms:}
We find the equivalent room from~\eqref{eq:3d_rank3_definition},
\begin{align}
\label{eq:3d_rank3_transformation}
\theta_k = t_k f(\theta^0_k, i), \quad \quad
\varphi_k = g(\theta^0_k, \varphi^0_k, \mT) + s_k \pi, 
\end{align}
where
\begin{equation}
\begin{aligned}
f(\theta^0_k, i) &= \arccos f \cos\theta^0_k,\\
g(\theta^0_k, \varphi^0_k, \mT) &= \arctan \frac{e \sin\theta^0_k \sin\varphi^0_k + f \cos\theta^0_k}{ \sin\theta^0_k (a\cos\varphi^0_k + b\sin\varphi^0_k) + c \cos\theta^0_k}, 
\end{aligned}
\end{equation}
$t_k \in \{ -1,1\}$ and $s_k \in \{ 0,1\}$.
The choice of $t_k$ uniquely determines $s_k$, such that~\eqref{eq:3d_rank3_definition} is satisfied.

\item \textit{Equivalence class:}
An equivalence class of rooms with respect to PPDM is given as
\begin{align}
\Big[ \{\mathcal{P}^0_k\}_{k=1}^K \Big] = \Big\{  \{\mathcal{P}_k\}_{k=1}^K \, \big| \, \varphi_k = g(\theta^0_k, \varphi^0_k, \mT) + s_k \pi, & \nonumber \\ \theta_k = t_k f(\theta^0_k, i),& \nonumber \\ a, b, c, e, f, i \in \mathbb{R} \mbox{ s.t.}~\eqref{eq:3d_rank3_quadratic} \mbox{ satisfied},& \nonumber \\ t_k \in \{ -1,1\}, s_k \in\{0,1\} \mbox{ s.t.}~\eqref{eq:3d_rank3_definition} \mbox{ satisfied},& \nonumber \\ q_k = q^0_k, \text{ for } 1 \leq k \leq K \Big\}&,
\end{align}
where $\{\mathcal{P}^0_k\}_{k=1}^K$ represents any of the reference rooms below.

The angles of the reference room $\{\varphi_k^0\}_{k=1}^K$ are chosen from $[0, 2\pi)$, while $\{\theta_k^0\}_{k=1}^K$ are computed from~\eqref{eq:3d_rank3_cos} and~\eqref{eq:3d_rank3_x12}.
For any arbitrarily chosen $\varphi_k^0$, we obtain four valid solutions $\theta_{k,1}^0 \hdots, \theta_{k, 4}^0$, which allow us to create up to four different walls for one fixed value of $\varphi_k^0$. We denote the number of walls created from one $\varphi_k^0$ by $\alpha$, and the number of independent walls in a room by $K_0$. Furthermore, we assume that we choose same $\alpha$ for all walls in a room, so we can categorize the reference rooms into four groups, from $\alpha=1$ to $\alpha = 4$.\footnote{We could also pick different $\alpha$ for every wall, but as such room construction only combines fundamental groups covered in the following and does not enrich our analysis, we do not discuss it further.}

\begin{enumerate}
    \item $\alpha = 1$. For a fixed $\varphi_k^0$, we select only one of the four valid solutions, assign it to $\theta_k^0$ and define a wall normal $k$ with $\{\theta_k^0, \varphi_k^0 \}$. Every wall introduces a new constraint~\eqref{eq:3d_rank3_quadratic} on six parameters in $\mT$. Two room-trajectory configurations that correspond to this case are shown in Fig.~\ref{fig:case_3_3d}.
    \begin{figure}[H]
    \centering
    \includegraphics[width=\linewidth]{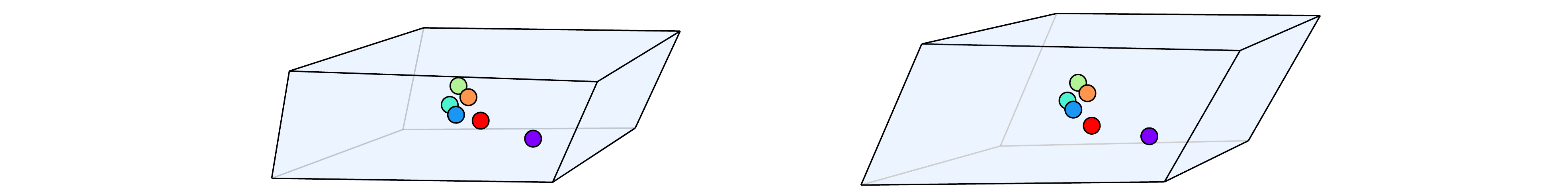}
    \caption{A pair of equivalent rooms in $3$D.}
    \label{fig:case_3_3d}
    \end{figure}
    
    \item $\alpha = 2$. For a fixed $\varphi_k^0$, we select two values of $\theta_k^0$ computed either from $x_1$ or $x_2$, and therefore create two parallel wall normals, for example $\{\theta_{k, 1}^0, \varphi_k^0 \}$ and $\{\theta_{k,2}^0, \varphi_k^0 \}$.
    The key observation in this case is that if~\eqref{eq:3d_rank3_quadratic} is satisfied for the wall normal $\{\theta_{k,1}^0, \varphi_k^0 \}$, then it is also satisfied for the wall normal $\{\theta_{k,2}^0, \varphi_k^0 \}$ without introducing additional constraints on the parameters in $\mT$. In Fig.~\ref{fig:case_3_3d_parallel} we illustrate an example of three rooms with $K_0= 3$.

    \begin{figure}[H]
        \centering
        \includegraphics[width=\linewidth]{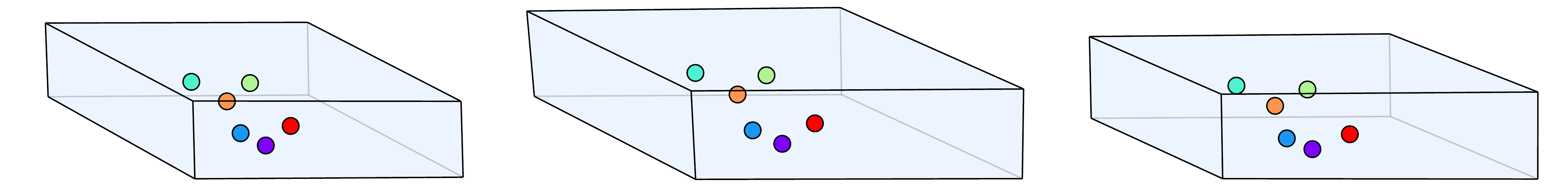}
        \caption{An example of equivalent rooms with three pairs of parallel walls.}
        \label{fig:case_3_3d_parallel}
    \end{figure}
    Similarly, for a fixed $\varphi_k^0$, we can select two values of $\theta_k^0$, one computed from $x_1$ and another one from $x_2$, and therefore create two differently oriented wall normals, for example $\{\theta_{k, 1}^0, \varphi_k^0 \}$ and $\{\theta_{k, 3}^0, \varphi_k^0 \}$.
    This case is equivalent to the one above; the only difference is that we replace one of the angles $\theta_{k,1}^0$ or $\theta_{k,2}^0$ with $\theta_{k,3}^0$ or $\theta_{k,4}^0$. Then, the reference room contains pairs of differently oriented, but dependent walls, instead of pairs of parallel walls.

    \item $\alpha = 3$. For a fixed $\varphi_k^0$, we select three values of $\theta_k^0$, one computed from $x_1$ ($x_2$) and two computed from $x_2$ ($x_1$). The room with $K = \alpha K_0$ walls constructed in such way contains $K_0$ pairs of parallel walls and $K_0$ variously oriented walls.
       
    \item $\alpha = 4$. For a fixed $\varphi_k^0$, we select all four valid solutions of $\theta_k^0$ and create two pairs of parallel wall normals, $\{\theta_{k,1}^0, \varphi_k^0 \}$, $\{\theta_{k,2}^0, \varphi_k^0 \}$, $\{\theta_{k,3}^0, \varphi_k^0 \}$ and $\{\theta_{k,4}^0, \varphi_k^0 \}$. Then, if~\eqref{eq:3d_rank3_quadratic} is satisfied for the wall normal $\{\theta_{k,1}^0, \varphi_k^0 \}$, it is also satisfied for the other three normals associated to $\varphi_k^0$ without introducing additional constraints on the parameters in $\mT$. Here, $K_0$ denotes the number of dependent quadruples of walls.

\end{enumerate}

In all of these cases, the reference room is defined by~\eqref{eq:3d_rank3_cos} and~\eqref{eq:3d_rank3_x12}. For $K = \alpha K_0 \geq 6 \alpha$ walls, the angles of the wall normals $\{\theta_k^0, \varphi_k^0\}_{k=1}^K$ cannot be chosen arbitrarily as they are determined by the parameters in $\mT$. Moreover, there is only one room equivalent to the reference, obtained from~\eqref{eq:3d_rank3_transformation}.
    
When $K  = \alpha K_0 < 6 \alpha$, we can choose \textit{any} room with $K$ walls to be the reference room and solve the system of $K_0$ equations~\eqref{eq:3d_rank3_quadratic} with $1 \leq \ k \leq K_0$ to find $K_0$ dependent parameters in $\mT$. Then, we generate new equivalent rooms from~\eqref{eq:3d_rank3_transformation} by changing the remaining $6 - K_0$ free parameters in $\mT$. An example of an arbitrarily chosen room with five walls together with the two equivalent rooms is shown in Fig.~\ref{fig:case_4}.

\begin{figure}[H]
    \centering
    \includegraphics[width=\linewidth]{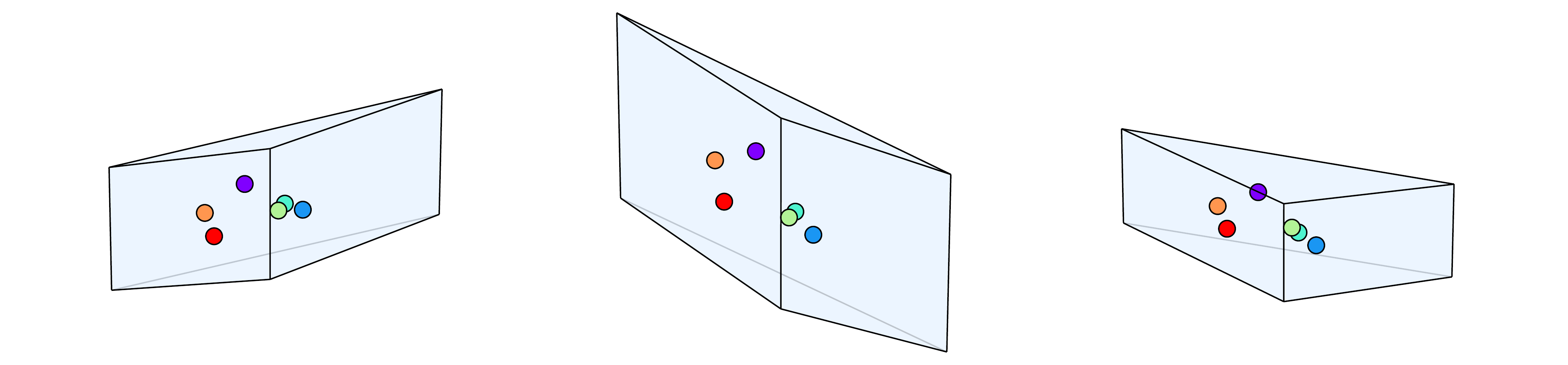}
    \caption{Rooms with less than six walls in $3$D that belong to the same equivalence class.}
    \label{fig:case_4}
\end{figure}

\item \textit{Corresponding trajectories:}
The nullspace of $\widebar{\mN}^\top$ is spanned by three vectors in all of the aforementioned cases,
\begin{align*}
\vv_1 &=
\begin{bmatrix}
 -a ,\ -b ,\ -c ,\ 1 ,\ 0 ,\ 0
\end{bmatrix}^\top, \\
\vv_2 &=
\begin{bmatrix}
 0 ,\ -e ,\ -f ,\ 0 ,\ 1 ,\ 0
\end{bmatrix}^\top,\\
\vv_3 &=
\begin{bmatrix}
0 ,\ 0 ,\ -i ,\ 0 ,\ 0 ,\ 1
\end{bmatrix}^\top.
\end{align*}
Then,
\begin{align}
\begin{bmatrix}
\vr_n^0 \\ -\vr_n
\end{bmatrix} =
\vv_1 \gamma_1 +
\vv_2 \gamma_2 +
\vv_3 \gamma_3,
\end{align}
where $\gamma_1, \gamma_2$ and $\gamma_3 \in \mathbb{R}$. The waypoints in one room are chosen arbitrarily and a non-rigid transformation $\mT^\top$ is applied to compute the waypoints in the equivalent room,
\(
\vr_n^0 = \mT^\top \vr_n.
\)

\end{enumerate}

\subsection{3D rank-3: Two sets of parallel walls}
\label{sec:3drank3II}

There is another equivalence class arising from $\rank(\widebar{\mN}) = 3$ for $A^2 + B^2 = 0$ and $\cos \varphi_k^0 \neq 0$.
One can show that these constraints lead to rooms with arbitrarily chosen angles $\theta_k^0$ and constant values for $\varphi_k^0$ (up to a shift by $\pi$), i.e., rooms with all walls parallel to a line.
An analysis similar to that in Section~\ref{sec:3drank2} shows that the rooms in the same equivalence class are simply rotated versions of the reference room.

\begin{enumerate}[leftmargin=*, wide, labelwidth=!, labelindent=0pt]\addtocounter{enumi}{2}
\item \textit{Reference room:}
We continue with $A^2 + B^2 = 0$ which implies $A = B = C = 0$, and in addition we assume that $\cos \varphi_k^0 = 0$. We omit steps $1$, $2$ and $6$ as they are identical to Section~\ref{sec:3drank2}.
From $B=0$, it follows that
\begin{align}
\label{eq:3d_rank3_A_B_0_new}
    a c = 0 \quad \mbox{and} \quad bc + ef = 0.
\end{align}
From~\eqref{eq:3d_rank3_A_B_0_new}, we conclude that either $a \neq 0, c = 0$, or $a = 0, c \neq 0$, or $a = c = 0$. The last two cases are not of our interest as $a=0$ implies that the $x$ coordinates of $\vr^0_n$ are $0$, and the points lie in the $yz$-plane. Such a degenerate trajectory is covered in our next case, $\rank(\widebar{\mN}) = 4$, so we do not study it further here. A similar observation can be made for $a \neq 0, c = 0, e=0$; the $y$ coordinates of $\vr^0_n$ are proportional to their $x$ coordinates, so the points lie in a plane, which corresponds to $\rank(\widebar{\mN}) = 4$.

A new equivalence class arises for $a \neq 0, c = 0, f = 0$. From $C = 0$, we obtain that $i = \pm 1$, while $A = 0$ defines $\varphi_k^0$,
\begin{align}
\label{eq:3d_rank3_A_0_ce_0}
    (a \cos \varphi_k^0 + b \sin \varphi_k^0)^2 + e^2 \sin^2 \varphi_k^0 = 1.
\end{align}
By introducing $u = \tan \frac{\varphi_k^0}{2}$ and $z = \frac{u^2 - 1}{u}$, we can find the solutions of~\eqref{eq:3d_rank3_A_0_ce_0} in terms of $z$,
\begin{align}
    z_{1, 2} = \frac{2 a b \pm 2 \sqrt{-a^2 e^2 + a^2 + b^2 - e^2 - 1}}{a^2-1},
\end{align}
from which we can express the four solutions of $\varphi_k^0$,
\begin{align}
\begin{aligned}
\label{eq:3d_rank3_A_0_ce_0_varphi}
    \varphi_k^0 &= 2 \arctan{\frac{z_i \pm \sqrt{z_i^2 + 4}} {2}},
\end{aligned}
\end{align}
for $i \in \set{1, 2}$. We observe that the normals computed from $z_1$ generate rooms with walls parallel to a certain line $\ell_1$. Analogously, the normals generated by $z_2$ are parallel to another line $\ell_2$. Therefore, to construct the reference room, we choose $\{\theta_k^0 \}_{k=1}^{K}$ from $[0, \pi)$, while $\{ \varphi_k^0 \}_{k=1}^K$ are computed from~\eqref{eq:3d_rank3_A_0_ce_0_varphi}, such that $K_1$ wall normals are derived from $z_1$ and $K_2$ wall normals from $z_2$, where $ K_1 + K_2 = K$.

\item \textit{Equivalent room:}
We find the equivalent rooms from~\eqref{eq:3d_rank3_definition} by the same computations as in Section~\ref{sec:3d_rank3_0}.

\item \textit{Equivalence class:}
The equivalence class also corresponds to the one in Section~\ref{sec:3d_rank3_0} with $c=0, f=0$ and $i = \pm 1$. The free parameter $a$ generates equivalent rooms,
\begin{align}
\Big[ \{\mathcal{P}^0_k\}_{k=1}^K \Big] = \Big\{ \{\mathcal{P}_k\}_{k=1}^K  \, \big| \, \theta_k = f(\theta_k^0, t_k, i=\pm 1),& \nonumber \\ \varphi_k = g(\theta_k^0, \varphi_k^0, s_k, a,b, c=0,e, f=0),& \nonumber \\ t_k, s_k \in\{0,1\} \mbox{ s.t. } \eqref{eq:3d_rank3_definition} \mbox{ satisfied},& \nonumber \\ b, e \in \mathbb{R} \mbox{ s.t. } \eqref{eq:3d_rank3_A_B_0_new} \mbox{ satisfied}, a \in \mathbb{R},& \nonumber \\ q_k = q_k^0, \mbox{ for } 1 \leq k \leq K \Big\}&.
\end{align}

\end{enumerate}
Note that the walls computed from $z_1$ do not have to enclose any specific shape, as long as they are equally inclined to all the walls obtained from $z_2$.

An interesting realistic room that belongs to this class is a room made up of four parallel walls that are perpendicular to the ceiling and the floor. By tilting the ceiling and the floor (changing the value of $a$), we can generate infinitely many equivalent rooms with respect to PPDM, see Fig.~\ref{fig:3d_rank3_2groups}.

\begin{figure}[H]
    \centering
    \includegraphics[width=\linewidth]{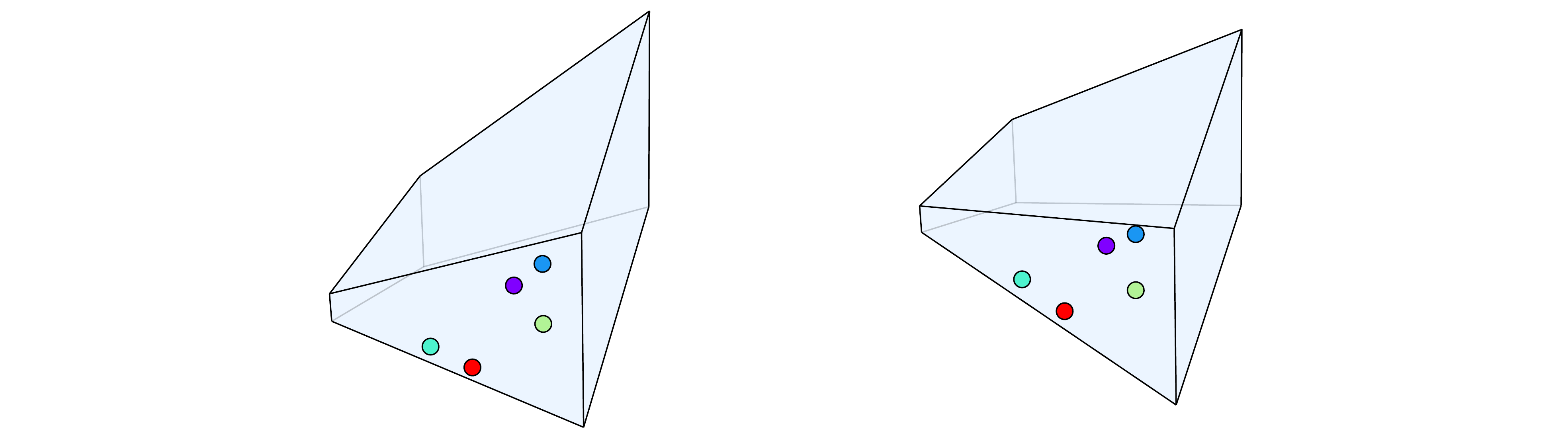}
    \caption{Equivalent rooms with two groups of walls enclosing a prismatic surface.}
    \label{fig:3d_rank3_2groups}
\end{figure}

\subsection{3D rank-4: Planar trajectories}

\begin{enumerate}[leftmargin=*, wide, labelwidth=!, labelindent=0pt]

\item \textit{Linear dependence:}
To achieve $\rank(\widebar{\mN}) = 4$, we assume that the fourth and the fifth column of $\widebar{\mN}$ are linear combinations of the remaining four,
\begin{equation}
\begin{aligned}
\label{eq:3d_rank4_definition}
&\begin{bmatrix}
\sin\theta_k \cos\varphi_k \\
\sin\theta_k \sin\varphi_k \\
\end{bmatrix} =
\mT \begin{bmatrix}
\sin\theta^0_k \cos\varphi^0_k \\
\sin\theta^0_k \sin\varphi^0_k \\
\cos\theta^0_k \\
\cos\theta_k \\
\end{bmatrix},
\end{aligned}
\end{equation}
where
\begin{align}
\mT = \begin{bmatrix}
a & b & c & d \\
e & f & g & h
\end{bmatrix}.
\end{align}

\item \textit{Reparametrization:}
As $r > m$, we cannot rewrite~\eqref{eq:3d_rank4_definition} so that the normals of $\mathcal{R}$ and $\mathcal{R}^0$ are on different sides.

\item \textit{Reference room}: In~\eqref{eq:3d_rank4_definition} we have two equations with four unknown angles for every $k$. Since the system is underdetermined, we can choose $\left\{ \theta^0_k, \varphi^0_k \right\}_{k=1}^{K}$ arbitrarily.

\item \textit{Equivalent rooms}:
We solve~\eqref{eq:3d_rank4_definition} for $\theta_k$ and $\varphi_k$, and express their dependence on $\theta^0_k$, $\varphi^0_k$ and the parameters in $\mT$,
\begin{equation}
\begin{aligned}
    \theta_k = s_k f(\theta_k^0, \varphi_k^0, \mT), \quad \quad
    \varphi_k = t_k h(\theta_k^0, \varphi_k^0, \mT), 
\end{aligned} 
\end{equation}
where $s_k, t_k \in \{ -1, 1 \}$, and
\begin{equation}
\begin{aligned}
    f(\theta^0_k, \varphi^0_k, &\mT) = \arccos \frac{-d G_a - h G_e \pm \sqrt{G}}{1 + d^2 + h^2}, \\
    h(\theta^0_k, \varphi^0_k,& \mT) = \arccos \frac{d \cos \theta_k+ G_a}{\sin \theta_k},
\end{aligned}
\end{equation}
and we introduced the following shortcuts:
\begin{equation}
\begin{aligned}
    G_a &:= a \sin \theta_k^0 \cos \varphi_k^0 + b \sin \theta_k^0 \sin \varphi_k^0 + c \cos \theta_k^0,\\ 
    G_e &:= e \sin \theta_k^0 \cos \varphi_k^0 + f \sin \theta_k^0 \sin \varphi_k^0 + g \cos \theta_k^0,\\
    G &:= (d G_a + h G_e)^2 - (1 + d^2 + h^2)(G_a^2 + G_e^2 - 1).
\end{aligned}
\end{equation}

\item \textit{Equivalence class}:
An equivalence class of rooms with respect to PPDM is
\begin{align}
    \Big[ \{\mathcal{P}^0_k\}_{k=1}^K \Big] = \Big\{ \{\mathcal{P}_k\}_{k=1}^K  \, \big| \, \theta_k =  s_k f(\theta_k^0, \varphi_k^0, \mT),& \nonumber \\ \varphi_k = t_k h(\theta_k^0, \varphi_k^0, \mT), &\nonumber \\ a, b, c, d, e,f,g,h \in \mathbb{R}, s_k, t_k \in \{0,1\}, \nonumber \\ q_k=q_k^0 \mbox{ for } 1 \leq k \leq K \Big\}&.
\end{align}

\item \textit{Corresponding trajectories:}
The nullspace of $\widebar{\mN}^\top$ is spanned by two vectors
\begin{align*}
\vv_1 &=
\begin{bmatrix}
 -a ,\ -b ,\ -c ,\ 1 ,\ 0 ,\ -d
\end{bmatrix}^\top, \\
\vv_2 &=
\begin{bmatrix}
 -e ,\ -f ,\ -g ,\ 0 ,\ 1 ,\ -h
\end{bmatrix}^\top,
\end{align*}
so the $n$th row of $\widebar{\mR}$ is
\begin{align}
\begin{bmatrix}
\vr^0_n \\ -\vr_n
\end{bmatrix}=
\vv_1 \gamma_1 +
\vv_2 \gamma_2,
\label{eq:coplanar}
\end{align}
where $\gamma_1, \gamma_2 \in \mathbb{R}$.
From~\eqref{eq:coplanar} we have that one coordinate of the waypoints $\vr^0_n$ and $\vr_n$ is a linear combination of the remaining two, meaning that the waypoints lie in a plane.

\end{enumerate}

We conclude that for arbitrarily chosen wall normals of the reference room, we can always find another room with identical distance measurements, as long as the trajectories in both rooms are planar, as in Fig.~\ref{fig:case_2a_3D}.

\begin{figure}[H]
    \centering
    \includegraphics[width=\linewidth]{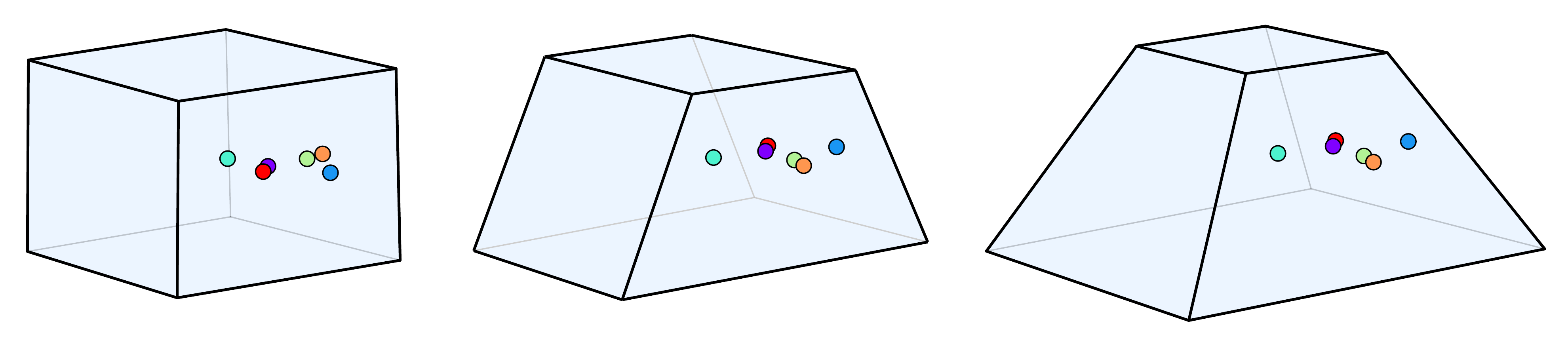}
    \caption{Rooms with planar trajectories and the same PPDM.}
    \label{fig:case_2a_3D}
\end{figure}

\subsection{3D rank-5: Linear trajectories}

\begin{enumerate}[leftmargin=*, wide, labelwidth=!, labelindent=0pt]

\item \textit{Linear dependence:}
Finally, let $\rank(\widebar{\mN}) = 5$, so that one column of $\widebar{\mN}^\top$ is a linear combination of the remaining independent columns,
\begin{align}
\label{eq:3d_rank5_definition}
\cos\theta_k = 
\mT
\begin{bmatrix}
\sin\theta_k \cos\varphi_k \\
\sin\theta_k \sin\varphi_k \\
\sin\theta^0_k \cos\varphi^0_k \\
\sin\theta^0_k \sin\varphi^0_k \\
\cos\theta^0_k \\
\end{bmatrix}, \, \text{where} \, \mT =\begin{bmatrix}
a & b & c & d & e 
\end{bmatrix}.
\end{align}

\item \textit{Reparametrization:}
Since $r > m$, this step is a no-op.

\item \textit{Reference room:} From~\eqref{eq:3d_rank5_definition}, we can choose walls of the reference room arbitrarily.

\item \textit{Equivalent rooms:}
Furthermore, we can express $\theta_k$ as a function of $\varphi_k$, $\theta^0_k$, $\varphi^0_k$ and the parameters in $\mT$,
\begin{align}
    \theta_k = f(\varphi_k, \theta_k^0, \varphi_k^0, s_k, \mT),
\end{align}
where
\begin{equation}
\begin{aligned}
    h(\varphi_k, a, b) &= a \cos \varphi_k + b \sin \varphi_k, \\
    f(\varphi_k, \theta_k^0, \varphi_k^0, s_k, \mT) &= s_k g(\varphi_k, \theta_k^0, \varphi_k^0, \mT) - \arctan \!h(\varphi_k, a, b), \\
    g(\varphi_k, \theta_k^0, \varphi_k^0, \mT) &= (\cdots)\\
    & \hspace{-2.2cm} (\cdots) = \arccos \frac{d \sin \theta_k^0 \sin \varphi_k^0 + e \cos \theta_k^0 +c \sin \theta_k^0 \cos \varphi_k^0}{\sqrt{h(a, b, \varphi_k)^2 + 1}},
\end{aligned}
\end{equation}
with $s_k \in \{ -1, 1 \}$.

\item \textit{Equivalence class:}
An equivalence class of rooms with respect to PPDM is given by
\begin{align}
    \Big[ \{\mathcal{P}^0_k\}_{k=1}^K \Big] = \Big\{ \{\mathcal{P}_k\}_{k=1}^K  \, \big| \, \theta_k = f(\varphi_k, \theta_k^0, \varphi_k^0, s_k,\mT),& \nonumber \\ \varphi_k \in \left[0, 2\pi \right\}, a, b, c, d, e \in \mathbb{R}, \nonumber \\ \, s_k \in \{-1,1\}, q_k=q_k^0, \mbox{ for } 1 \leq k \leq K \Big\}&.
\end{align}

\item \textit{Corresponding trajectories:}
The nullspace of $\widebar{\mN}^\top$ is spanned by
\begin{align*}
\vv_1 = \begin{bmatrix} -c, & -d, & -e, & -a, & -b, & 1 \end{bmatrix}^\top,
\end{align*}
so the columns of $\widebar{\mR}$ have to be of the form
\begin{equation}
    \begin{bmatrix}
    \vr_n^0 \\
    -\vr_n
    \end{bmatrix} = \vv_1 \gamma,
\end{equation}
where $\gamma \in \mathbb{R}$. Therefore, $x$ and $y$ coordinates of the waypoints $\left\{ \vr^0_n \right\}_{n=1}^N$ and $\left\{ \vr_n \right\}_{n=1}^N$ are only scaled values of the $z$ coordinates of $\left\{ \vr_n \right\}_{n=1}^N$, so the trajectories are linear.

\end{enumerate}

We conclude that for any arbitrarily chosen room, we can always find another room with the same PPDM, as long as the trajectories in both rooms are linear. While linear trajectories may seem a special case of the previous one, the room transformations are rather different. One example of such configurations is illustrated in Fig.~\ref{fig:case_5_3D}.

\begin{figure}[H]
    \centering
    \includegraphics[width=\linewidth]{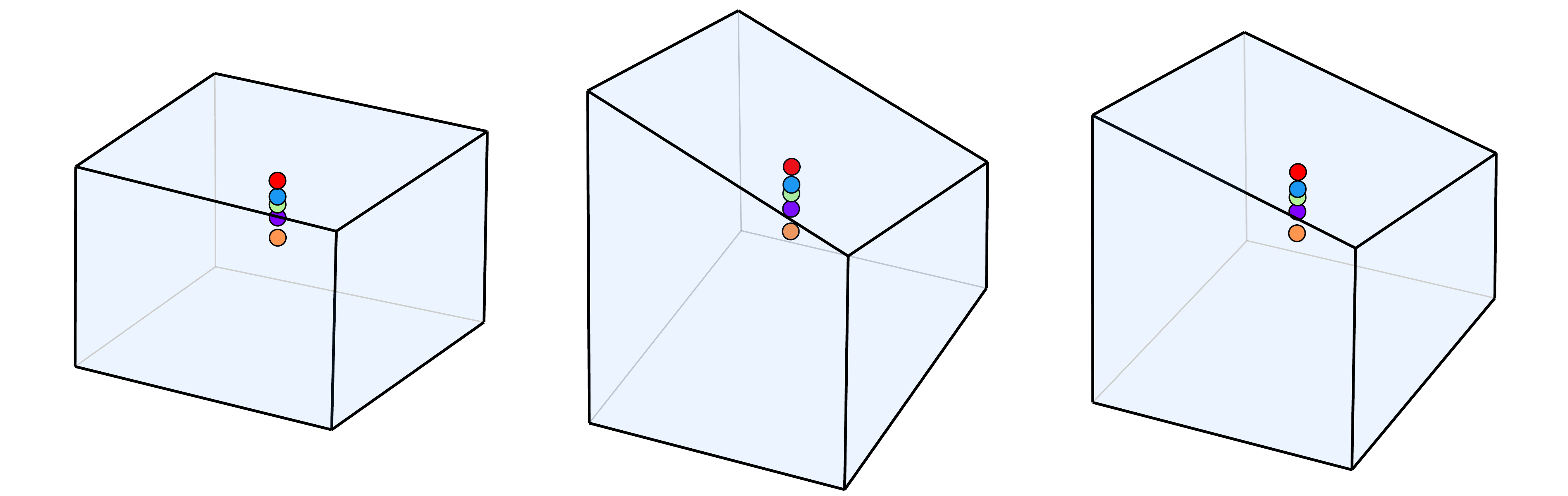}
    \caption{Rooms with linear trajectories and the same PPDM.}
    \label{fig:case_5_3D}
\end{figure}

\section{Conclusion}
\label{discussion}

We derived sufficient and necessary conditions for unique reconstruction of point--plane configurations from their pairwise distances. Our analysis hinges on a new algebraic tool called point-to-plane distance matrix (PPDM). We exhaustively identify the geometries of points and planes that cannot be distinguished given their PPDMs.

Our motivation comes from the challenging problem of multipath-based simultaneous localization and mapping (SLAM) and our study has consequences for practical indoor localization problems. Picture an unknown room with no preinstalled infrastructure and a mobile device equipped with a single omnidirectional source and a single omnidirectional receiver. The distance measurements between the points and planes are given as the time-of-flights of the first-order echoes recorded by the device.
Therefore, our theoretical results provide a fundamental understanding and constraints under which rooms can uniquely be reconstructed from only first-order echoes.

While our analysis here starts with the PPDM, preparing the PPDM in real scenarios puts forward additional challenges, namely PPDM completion and denoising, and echo sorting. Our ongoing research includes the development and implementation of computational tools and heuristics for localization from noisy, incomplete, and unlabeled PPDMs.

\begin{appendix}

For all $m$ and $r$ we worked with a particular selection of $r$ independent columns. We prove here that this choice can be made without loss of generality. We will call the particular column choice in Sections \ref{sec:classification_2d} and  \ref{sec:classification_3d} \textit{the original choice}.

First note that there is symmetry between reference and equivalent rooms. For example, for $r=1$ in $2$D, 
given the original choice of $r$ independent columns we have
\begin{align}
\label{eq:example_original}
\begin{bmatrix}
\sin \varphi_k,  \cos \varphi^0_k, \sin \varphi^0_k
\end{bmatrix}^\top = \begin{bmatrix}
a,  b,  c
\end{bmatrix}^\top
\cos \varphi_k.
\end{align}
We can swap the normals $\{\varphi^0_k \}_{k=1}^K$ and $\{\varphi_k \}_{k=1}^K$ for every $k$, and obtain a new, symmetric choice of $r$ independent columns
\begin{align}
\label{eq:example_swapped}
\begin{bmatrix}
\sin \varphi_k^0, \cos \varphi_k,  \sin \varphi_k
\end{bmatrix}^\top = \begin{bmatrix}
a, b, c
\end{bmatrix}^\top
\cos \varphi_k^0.
\end{align}
The two systems \eqref{eq:example_original} and \eqref{eq:example_swapped} give the same equivalence class. 

A similar conclusion follows if the new choice is obtained by rearranging the order of the coordinates of the normals.
Again, for $r=1$ in $2$D we have that
\begin{align}
\label{eq:2d_rank_1_swapped}
\begin{bmatrix}
\cos \varphi_k^0, \cos \varphi_k, \sin \varphi_k
\end{bmatrix}^\top = \begin{bmatrix}
a, b, c
\end{bmatrix}^\top
\sin \varphi_k^0
\end{align}
can be transformed to the studied case of~\eqref{eq:2d_rank_1}. Indeed, by applying a rotation by $\pi/2$
to the normals of the reference room, we obtain a new reference room which satisfies~\eqref{eq:2d_rank_1}, but rotated configurations are considered to be equivalent.

In the following we show that any choice of $r$ independent columns not covered by the two previous examples can be transformed into one of the cases analyzed in Sections~\ref{sec:classification_2d} and~\ref{sec:classification_3d} (for $r=2$ in $2$D and $r \in \{2, 3, 4\}$ in $3$D).

\begin{enumerate}[leftmargin=*, wide, labelwidth=!, labelindent=0pt]

\item \textbf{2D rank-2.} 
By symmetry, it is sufficient to show that
\begin{align}
    \begin{bmatrix}
    \cos \varphi_k \\ \cos \varphi_k^0
    \end{bmatrix} =
    \begin{bmatrix}
    a & b \\ c & d
    \end{bmatrix}
    \begin{bmatrix}
    \sin \varphi_k \\ \sin \varphi_k^0
    \end{bmatrix}
\end{align}
can be transformed into~\eqref{eq:2d_rank2_definition}.
For $c \neq 0$, it follows directly:
\begin{align}
    \begin{bmatrix}
    \cos \varphi_k \\ \sin \varphi_k
    \end{bmatrix} =
    \frac{1}{c}
    \begin{bmatrix}
    a & bc-ad \\ 1 & -d
    \end{bmatrix}
    \begin{bmatrix}
    \cos \varphi_k^0 \\ \sin \varphi_k^0
    \end{bmatrix}.
\end{align}
For $c = 0$ we have $\tan \varphi_k^0 = \tfrac{1}{d}$, addressed in \eqref{eq:2d_rank_1}.

\item \textbf{3D rank-2.} 
By symmetry, we only analyze
\begin{align}
\label{eq:3d_rank2_appendix_new_1}
\begin{bmatrix}
\cos \theta_k^0 \\ \sin \theta_k^0 \sin \varphi_k^0 \\ \sin \theta_k \sin \varphi_k \\ \cos \theta_k 
\end{bmatrix} = 
\begin{bmatrix}
a & b \\ c & d \\ e & f \\ g & h
\end{bmatrix}
\begin{bmatrix}
\sin \theta_k^0 \cos \varphi_k^0 \\ \sin \theta_k \cos \varphi_k
\end{bmatrix}
\end{align}
and transform it into~\eqref{eq:3d_rank2_definition} as
\begin{align}
\label{eq:3d_rank2_appendix_new}
\begin{bmatrix}
\cos \theta_k^0 \\ \sin \theta_k \cos \varphi_k \\ \sin \theta_k \sin \varphi_k \\ \cos \theta_k 
\end{bmatrix} = \frac{1}{d}
\begin{bmatrix}
ad-bc & b \\ -c & 1 \\ ef-cf & f \\ gd-ch & h
\end{bmatrix}
\begin{bmatrix}
\sin \theta_k^0 \cos \varphi_k^0 \\ \sin \theta_k^0 \sin \varphi_k^0
\end{bmatrix}
\end{align}
for $d\neq 0$.
If $d=0$ and $b \neq 0$, a substitution $\sin\theta_k\cos\varphi_k = \tfrac{1}{b}(\cos\theta_k^0 - a \sin\theta_k^0\cos\theta_k^0)$ from the first equation of~\eqref{eq:3d_rank2_appendix_new_1} into the last two equations of~\eqref{eq:3d_rank2_appendix_new_1} gives a system equivalent to~\eqref{eq:3d_rank2_definition}.
For $b=d=0$, we get constant normals, discussed in \eqref{eq:3d_rank1_a}.

\item \textbf{3D rank-3.} 
Again, we only analyze
\begin{align}
\label{eq:3d_rank3_appendix_new}
\begin{bmatrix}
\sin \theta_k \cos\varphi_k \\ \sin \theta_k \sin \varphi_k  \\ \cos \theta_k^0
\end{bmatrix} = 
\begin{bmatrix}
a & b & c \\ d & e & f \\ g & h & i
\end{bmatrix}
\begin{bmatrix}
\sin \theta_k^0 \cos\varphi_k^0 \\ \sin \theta_k^0 \sin \varphi_k^0  \\ \cos \theta_k
\end{bmatrix}
\end{align}
and show that we can transform it into~\eqref{eq:3d_rank3_definition}. Indeed, for $i \neq 0$,
\begin{align}
\begingroup 
\setlength\arraycolsep{3.3pt}
\begin{bmatrix}
\sin \theta_k \cos\varphi_k \\ \sin \theta_k \sin \varphi_k  \\
\end{bmatrix} = \frac{1}{i}
\begin{bmatrix}
ai-cg & bi-ch & c \\ di-fg & ei-fh & f \\ -g & -h & 1
\end{bmatrix}
\begin{bmatrix}
\sin \theta_k^0 \cos\varphi_k^0 \\ \sin \theta_k^0 \sin \varphi_k^0  \\ \cos \theta_k^0
\end{bmatrix}.
\endgroup
\end{align}
For $i=0, g\neq0$ or $i=g=0, h\neq0$ we can substitute either $\sin\theta_k^0\cos\varphi_k^0$ or $\sin\theta_k^0\sin\varphi_k^0$ from the last equation of~\eqref{eq:3d_rank3_appendix_new} into the first two equations of~\eqref{eq:3d_rank3_appendix_new}, getting \eqref{eq:3d_rank4_definition}.
The same holds for $i=g=h=0$, with an additional constraint $\cos\theta_k^0=0$ on the reference normals.

\item \textbf{3D rank-4.}
Let us assume
\begin{align}
\label{eq:3d_rank4_appendix_new}
\begin{bmatrix}
\sin\theta_k \cos\varphi_k \\
\sin\theta^0_k \cos\varphi^0_k \\
\end{bmatrix} =
\begin{bmatrix}
a & b & c & d \\ e & f & g & h
\end{bmatrix}
\begin{bmatrix}
\sin\theta_k \cos\varphi_k \\
\sin\theta^0_k \sin\varphi^0_k \\
\cos\theta^0_k \\
\cos\theta_k \\
\end{bmatrix}.
\end{align}
Thanks to symmetry, this is the only case of our interest and we transform it to the well-studied system~\eqref{eq:3d_rank4_definition} for $e \neq 0$:
\begin{align*}
\begin{bmatrix}
\sin\theta_k \cos\varphi_k \\
\sin\theta_k \sin\varphi_k \\
\end{bmatrix} = \frac{1}{e}
\begin{bmatrix}
a & 1 \\ b e - a f & -f \\ c e - a g & -g \\ d e - a h & -h 
\end{bmatrix}^\top
\begin{bmatrix}
\sin\theta^0_k \cos\varphi^0_k \\
\sin\theta^0_k \sin\varphi^0_k \\
\cos\theta^0_k \\
\cos\theta_k \\
\end{bmatrix}.
\end{align*}
If $e = 0$ and $h \neq 0$, substituting $\cos \theta_k$ from the second into the first equation of~\eqref{eq:3d_rank4_appendix_new} gives~\eqref{eq:3d_rank4_definition}. By similar substitutions for $e=h=0$, $f \neq 0$, and $e=f=h=0$, $g\neq 0$, we get \eqref{eq:3d_rank5_definition}. Finally, $e=f=h=g=0$ also corresponds to $r=5$ in $3$D, with an additional constraint  $\sin\theta_k^0\cos\varphi_k^0=0$.

\end{enumerate}

\end{appendix}

\bibliographystyle{IEEEtran}
\bibliography{library}

\begin{thebibliography}{10}
\providecommand{\url}[1]{#1}
\csname url@samestyle\endcsname
\providecommand{\newblock}{\relax}
\providecommand{\bibinfo}[2]{#2}
\providecommand{\BIBentrySTDinterwordspacing}{\spaceskip=0pt\relax}
\providecommand{\BIBentryALTinterwordstretchfactor}{4}
\providecommand{\BIBentryALTinterwordspacing}{\spaceskip=\fontdimen2\font plus
\BIBentryALTinterwordstretchfactor\fontdimen3\font minus
  \fontdimen4\font\relax}
\providecommand{\BIBforeignlanguage}[2]{{%
\expandafter\ifx\csname l@#1\endcsname\relax
\typeout{** WARNING: IEEEtran.bst: No hyphenation pattern has been}%
\typeout{** loaded for the language `#1'. Using the pattern for}%
\typeout{** the default language instead.}%
\else
\language=\csname l@#1\endcsname
\fi
#2}}
\providecommand{\BIBdecl}{\relax}
\BIBdecl

\bibitem{thrun2006affine}
S.~Thrun, ``Affine structure from sound,'' \emph{Advances in Neural Information
  Processing Systems}, pp. 1353--1360, 2006.

\bibitem{kuang2012understanding}
Y.~Kuang, E.~Ask, S.~Burgess, and K.~{\AA}str{\"o}m, ``Understanding {TOA} and
  {TDOA} network calibration using far field approximation as initial
  estimate.'' \emph{International Conference on Pattern Recognition
  Applications and Methods}, pp. 590--596, 2012.

\bibitem{Krekovic2018structure}
M.~Krekovi\'{c}, G.~Baechler, I.~Dokmani\'{c}, and M.~Vetterli, ``Structure
  from sound with incomplete data,'' in \emph{IEEE International Conference on
  Acoustics, Speech and Signal Processing}, 2018.

\bibitem{peng2015room}
F.~Peng, T.~Wang, and B.~Chen, ``Room shape reconstruction with a single mobile
  acoustic sensor,'' in \emph{IEEE Global Conference on Signal and Information
  Processing}, 2015, pp. 1116--1120.

\bibitem{krekovic2017omnidirectional}
M.~Krekovi{\'c}, I.~Dokmani{\'c}, and M.~Vetterli, ``Omnidirectional bats,
  point-to-plane distances, and the price of uniqueness,'' in \emph{IEEE
  International Conference on Acoustics, Speech and Signal Processing}, 2017,
  pp. 3261--3265.

\bibitem{boutin2019drone}
M.~Boutin and G.~Kemper, ``A drone can hear the shape of a room,'' \emph{arXiv
  preprint arXiv:1901.10472}, 2019.

\bibitem{dokmanic2015euclidean}
I.~Dokmani\'{c}, R.~Parhizkar, J.~Ranieri, and M.~Vetterli, ``Euclidean
  distance matrices: essential theory, algorithms, and applications,''
  \emph{IEEE Signal Processing Magazine}, vol.~32, no.~6, pp. 12--30, 2015.

\bibitem{schonemann1970metric}
P.~H. Sch{\"o}nemann, ``On metric multidimensional unfolding,''
  \emph{Psychometrika}, vol.~35, no.~3, pp. 349--366, 1970.

\bibitem{bancroft1985algebraic}
S.~Bancroft, ``An algebraic solution of the {GPS} equations,'' \emph{IEEE
  Transactions on Aerospace and Electronic Systems}, no.~1, pp. 56--59, 1985.

\bibitem{Dokmanic2015thesis}
I.~Dokmani{\'c}, ``Listening to distances and hearing shapes,'' Ph.D.
  dissertation, EPFL, 2015.

\bibitem{Dokmanic2013}
I.~Dokmani{\'c}, R.~Parhizkar, A.~Walther, Y.~M. Lu, and M.~Vetterli,
  ``Acoustic echoes reveal room shape,'' \emph{Proceedings of the National
  Academy of Sciences}, vol. 110, no.~30, pp. 12\,186--12\,191, 2013.

\bibitem{Antonacci2012}
F.~Antonacci, J.~Filos, M.~R. Thomas, E.~A. Habets, A.~Sarti, P.~A. Naylor, and
  S.~Tubaro, ``Inference of room geometry from acoustic impulse responses,''
  \emph{IEEE Transactions on Audio, Speech, and Language Processing}, vol.~20,
  no.~10, pp. 2683--2695, 2012.

\bibitem{Davison2007}
A.~J. Davison, I.~D. Reid, N.~D. Molton, and O.~Stasse, ``{MonoSLAM}: Real-time
  single camera {SLAM},'' \emph{IEEE Transactions on Pattern Analysis and
  Machine Intelligence}, vol.~29, no.~6, pp. 1052--1067, 2007.

\bibitem{Clipp2010}
B.~Clipp, J.~Lim, J.-M. Frahm, and M.~Pollefeys, ``Parallel, real-time visual
  {SLAM},'' in \emph{IEEE/RSJ International Conference on Intelligent Robots
  and Systems}, 2010, pp. 3961--3968.

\bibitem{Blosch2010}
M.~Bl{\"o}sch, S.~Weiss, D.~Scaramuzza, and R.~Siegwart, ``Vision based {MAV}
  navigation in unknown and unstructured environments,'' in \emph{IEEE
  International Conference on Robotics and Automation}, 2010, pp. 21--28.

\bibitem{fuentes2015visual}
J.~Fuentes-Pacheco, J.~Ruiz-Ascencio, and J.~M. Rend{\'o}n-Mancha, ``Visual
  simultaneous localization and mapping: a survey,'' \emph{Artificial
  Intelligence Review}, vol.~43, no.~1, pp. 55--81, 2015.

\bibitem{Djugash2006}
J.~Djugash, S.~Singh, G.~Kantor, and W.~Zhang, ``Range-only {SLAM} for robots
  operating cooperatively with sensor networks,'' in \emph{IEEE International
  Conference on Robotics and Automation}, 2006, pp. 2078--2084.

\bibitem{Blanco2008}
J.-L. Blanco, J.-A. Fern{\'a}ndez-Madrigal, and J.~Gonz{\'a}lez, ``Efficient
  probabilistic range-only {SLAM},'' in \emph{IEEE/RSJ International Conference
  on Intelligent Robots and Systems}, 2008, pp. 1017--1022.

\bibitem{menegatti2009range}
E.~Menegatti, A.~Zanella, S.~Zilli, F.~Zorzi, and E.~Pagello, ``Range-only
  {SLAM} with a mobile robot and a wireless sensor networks,'' in \emph{IEEE
  International Conference on Robotics and Automation}, 2009, pp. 8--14.

\bibitem{Hu2011}
J.-S. Hu, C.-Y. Chan, C.-K. Wang, M.-T. Lee, and C.-Y. Kuo, ``Simultaneous
  localization of a mobile robot and multiple sound sources using a microphone
  array,'' \emph{Advanced Robotics}, vol.~25, no. 1-2, pp. 135--152, 2011.

\bibitem{zhou2017batmapper}
B.~Zhou, M.~Elbadry, R.~Gao, and F.~Ye, ``Bat{M}apper: {A}coustic sensing based
  indoor floor plan construction using smartphones,'' in \emph{15th Annual
  International Conference on Mobile Systems, Applications, and Services},
  2017, pp. 42--55.

\bibitem{evers2018acoustic}
C.~Evers and P.~A. Naylor, ``Acoustic {SLAM},'' \emph{IEEE/ACM Transactions on
  Audio, Speech, and Language Processing}, vol.~26, no.~9, pp. 1484--1498,
  2018.

\bibitem{pradhan2018smartphone}
S.~Pradhan, G.~Baig, W.~Mao, L.~Qiu, G.~Chen, and B.~Yang, ``Smartphone-based
  acoustic indoor space mapping,'' \emph{Proceedings of the ACM on Interactive,
  Mobile, Wearable and Ubiquitous Technologies}, vol.~2, no.~2, p.~75, 2018.

\bibitem{Brunner2011}
C.~Brunner, T.~Peynot, and T.~Vidal-Calleja, ``Combining multiple sensor
  modalities for a localisation robust to smoke,'' in \emph{IEEE/RSJ
  International Conference on Intelligent Robots and Systems}, 2011, pp.
  2489--2496.

\bibitem{kubelka2015robust}
V.~Kubelka, L.~Oswald, F.~Pomerleau, F.~Colas, T.~Svoboda, and M.~Reinstein,
  ``Robust data fusion of multimodal sensory information for mobile robots,''
  \emph{Journal of Field Robotics}, vol.~32, no.~4, pp. 447--473, 2015.

\bibitem{milford2016ratslam}
M.~Milford, A.~Jacobson, Z.~Chen, and G.~Wyeth, ``Rat{SLAM}: {U}sing models of
  rodent hippocampus for robot navigation and beyond,'' in \emph{Robotics
  Research}.\hskip 1em plus 0.5em minus 0.4em\relax Springer, 2016, pp.
  467--485.

\bibitem{Meissner2014}
P.~Meissner, ``Multipath-assisted indoor positioning,'' Ph.D. dissertation,
  Graz University of Technology, 2014.

\bibitem{Leitinger2015}
E.~Leitinger, P.~Meissner, M.~Lafer, and K.~Witrisal, ``Simultaneous
  localization and mapping using multipath channel information,'' in \emph{IEEE
  International Conference on Communication Workshop}, 2015, pp. 754--760.

\bibitem{Dokmanic2014tc}
I.~Dokmani{\'c}, L.~Daudet, and M.~Vetterli, ``How to localize ten microphones
  in one finger snap,'' in \emph{Proceedings of the 22nd European Signal
  Processing Conference}, 2014, pp. 2275--2279.

\bibitem{evers2016acoustic}
C.~Evers, A.~H. Moore, and P.~A. Naylor, ``Acoustic simultaneous localization
  and mapping (a-{SLAM}) of a moving microphone array and its surrounding
  speakers,'' in \emph{IEEE International Conference on Acoustics, Speech and
  Signal Processing}, 2016, pp. 6--10.

\bibitem{Kietlinski2011thesis}
J.~Kietlinski-Zaleski, ``Ultra-wideband positioning using reflections from
  known indoor features,'' Ph.D. dissertation, University of Pisa, 2011.

\bibitem{Kietlinski2011}
J.~Kietlinski-Zaleski and T.~Yamazato, ``{TDoA UWB} positioning with three
  receivers using known indoor features,'' \emph{IEICE Transactions on
  Fundamentals of Electronics, Communications and Computer Sciences}, vol.~94,
  no.~3, pp. 964--971, 2011.

\bibitem{leitingerbelief}
E.~Leitinger, F.~Meyer, F.~Hlawatsch, K.~Witrisal, F.~Tufvesson, and M.~Z. Win,
  ``A scalable belief propagation algorithm for radio signal based {SLAM},''
  \emph{CoRR}, vol. abs/1801.04463, 2018.

\bibitem{Krekovic2016echoslam}
M.~Krekovi\'{c}, I.~Dokmani\'{c}, and M.~Vetterli, ``Echo{SLAM}: Simultaneous
  localization and mapping with acoustic echoes,'' in \emph{IEEE International
  Conference on Acoustics, Speech and Signal Processing}, 2016, pp. 11--15.

\bibitem{dokmanic2016acoustic}
I.~Dokmani{\'c}, L.~Daudet, and M.~Vetterli, ``From acoustic room
  reconstruction to {SLAM},'' in \emph{IEEE International Conference on
  Acoustics, Speech and Signal Processing}, 2016, pp. 6345--6349.

\bibitem{Krekovic2016look}
M.~Krekovi\'{c}, I.~Dokmani\'{c}, and M.~Vetterli, ``Look, no beacons!
  {O}ptimal all-in-one {E}cho{SLAM},'' in \emph{50th Annual Asilomar Conference
  on Signals, Systems, and Computers}, 2016.

\bibitem{givens1958computation}
W.~Givens, ``Computation of plain unitary rotations transforming a general
  matrix to triangular form,'' \emph{Journal of the Society for Industrial and
  Applied Mathematics}, vol.~6, no.~1, pp. 26--50, 1958.

\end{thebibliography}

%

%








\end{document}